\documentclass[aps,pra,reprint,superscriptaddress,longbibliography,amsmath,amssymb]{revtex4-2}
\usepackage{graphicx}  
\usepackage{bm}        
\usepackage{bbm}       
\usepackage{hyperref}  
\usepackage{amsmath,amssymb,amsfonts,amsthm}
\usepackage{braket}
\usepackage{booktabs}
\usepackage{dcolumn}
\usepackage{xcolor}
\hypersetup{
  colorlinks=true,
  linkcolor={red!50!black},
  citecolor={blue!90!black},
  urlcolor={blue!80!black}
}

\newtheorem{theorem}{Theorem}

\begin{document}

\title{Enhancing Quantum Metrology with High-order Fisher Information and Experiments}

\author{Xin-Zhu Liu}
\affiliation{School of Information Science and Technology, Southwest Jiaotong University, Chengdu 610031, China}

\author{Jun-Li Jiang}
\affiliation{School of Information Science and Technology, Southwest Jiaotong University, Chengdu 610031, China}

\author{Yan-Han Yang}
\affiliation{School of Information Science and Technology, Southwest Jiaotong University, Chengdu 610031, China}

\author{Li-Ming Zhao}
\affiliation{School of Information Science and Technology, Southwest Jiaotong University, Chengdu 610031, China}

\author{Xue Yang}
\affiliation{School of Information Science and Technology, Southwest Jiaotong University, Chengdu 610031, China}

\author{Shao-Ming Fei}
\affiliation{School of Mathematical Sciences, Capital Normal University, Beijing 100048, China}

\author{Ming-Xing Luo}
\affiliation{School of Information Science and Technology, Southwest Jiaotong University, Chengdu 610031, China}
\affiliation{Hefei National Laboratory, Hefei, Anhui 230088, China}
\email{mxluo@swjtu.edu.cn}

\begin{abstract}
Fisher information plays a central role in statistics and quantum metrology, providing the basis for the celebrated Cram\'{e}r-Rao bound. In this work, we introduce a new information measure based on higher-order Fisher information and show that it naturally leads to a generalized uncertainty relation for parameter estimation, which can be regarded as an extension of the Cramér-Rao bound. 
As an application, we analyze the case of quantum phase estimation with a single qubit and compare our theoretical bounds with the well-known established hierarchical bounds. Finally, we experimentally validate the proposed framework using a photonic platform.
\end{abstract}

\maketitle

\section{Introduction}
\label{sec:intro}

The Cram\'{e}r-Rao bound (CRB) assesses estimator performance using Fisher Information (FI) \cite{Cramer1999, Rao1992}. Helstrom extended this to the quantum domain by replacing probability functions with density operators \cite{Helstrom1969}, leading to the quantum Cram\'{e}r-Rao bound, a quantized version of the classical bound optimized over all quantum measurements \cite{Braunstein1994}. Various extensions such as multi-parameter estimation \cite{liu2021}, adaptive Bayesian methods \cite{valeri2020}, and photonic implementations surpassing the Heisenberg limit \cite{yin2023} highlight quantum technologies' potential for precision measurements in applications like gravitational wave detection and biosensing \cite{Giovannetti2006, Correa2015, Antonella2016, Lovchinsky2016, Abbott2016, Degen2017, Abbas2021, Proctor2018, Pezze2018, Casola2018, couteau2023, McCuller2020, polino2020, couteau2023, Marvian2022}. 

In quantum metrology, the standard estimation error lower bound depends on the quantum Fisher information (QFI), which is determined by the first-order derivative of the statistical distribution's spherical representation. It limits the ability to characterize parameters due to its specific regularity conditions. Recent advances in quantum metrology, using the Bures metric and linear combinations of test functions, have partially addressed this limitation by allowing the quantization of established bounds \cite{Gessner2023}. These methods create hierarchies of frequentist bounds \cite{Hammersley1950,Chapman1951,Abel1993,Bhattacharyya1946,Barankin1949,Gessner2023}, including the QCRB. These advancements offer a more complete parameter estimation framework, especially when the QCRB is insufficient.

In this paper, we propose a generalized form of information derived from FI for both classical and quantum scenarios. This information can be treated as a score function, and it corresponds to the higher-order derivative of the spherical representation of estimated probability distributions (see Figure \ref{fig1}). We prove that this information naturally leads to a generalized uncertainty relation for metrology. We apply this framework to quantum phase estimation with a single qubit. We compare it with well-established hierarchical bounds, including QCRB\cite{Helstrom1969,Hammersley1950,Chapman1951,Barankin1949}, quantum Bhattacharyya bounds \cite{Bhattacharyya1946}, quantum Abel bounds \cite{Abel1993}, and Gessner-Smerzi bounds \cite{Gessner2023}. We validate our method experimentally on a photonic platform of the quantum phase estimation problem, demonstrating its practical applicability. Despite the non-additive nature of the high-order information, our bound exhibits asymptotic properties, and estimation error can be minimized with multiple measurements.

\begin{figure}
    \centering    
    \includegraphics[width=\linewidth]{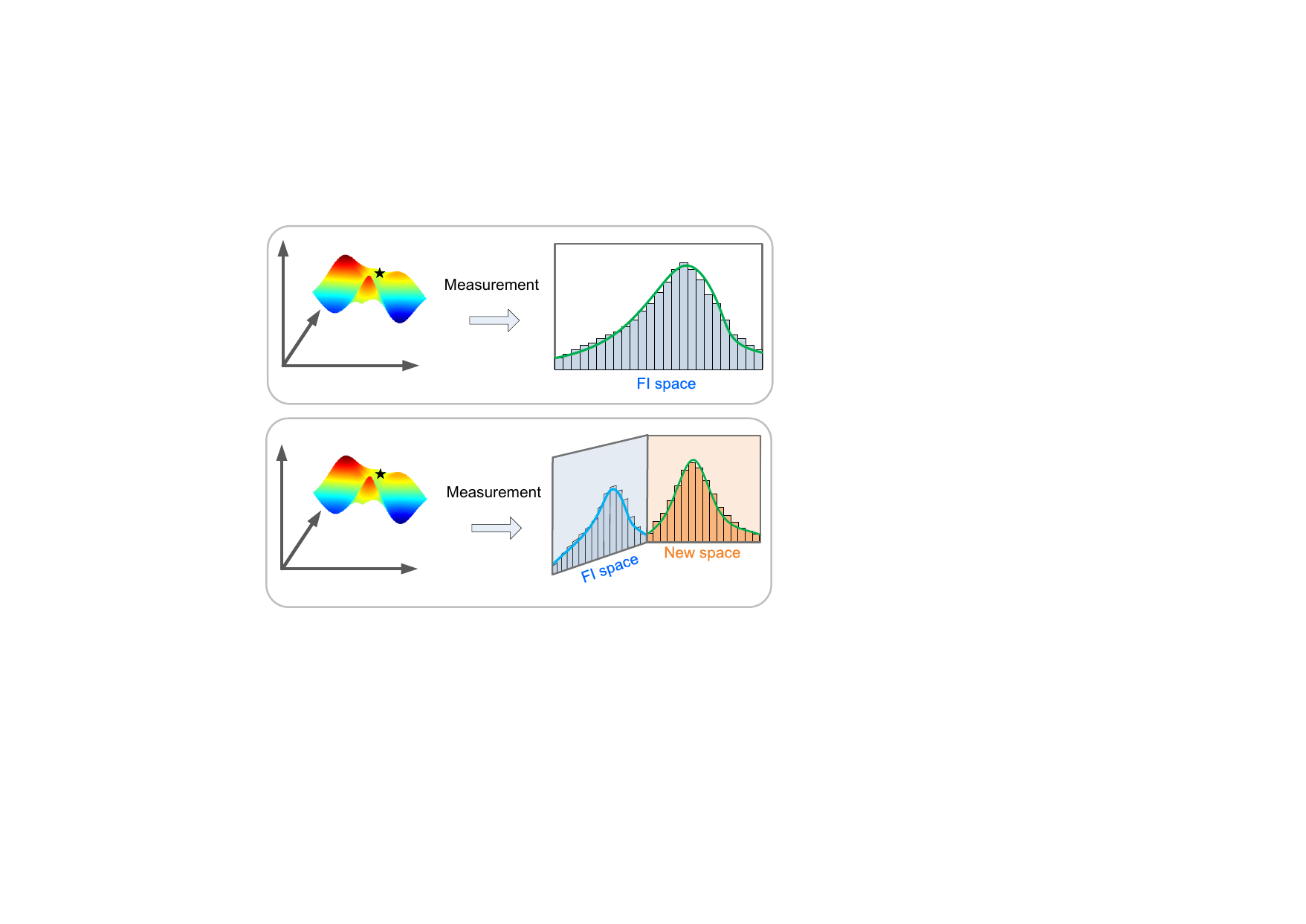}
    \caption{Schematic metrology using Fisher information alone (top) or with higher-order information (bottom). Estimation errors of parameterized unknowns are characterized using feature spaces defined by Fisher information and second-order information. }
   \label{fig1}
\end{figure}

\section{Metrology with high-order information}

Consider a statistical experiment aimed at estimating an unknown parameter $\theta$. Let $\{p(x|\theta): \theta \in \Theta\}$ be a family of probability density functions for a continuous random variable $X$ on a domain $\mathbb{D} \subseteq \mathbb{R}$. Given an unbiased estimator $\hat{\theta}$,i.e., one satisfying $\langle\hat{\theta}\rangle_{\theta_0}=\theta_0$ for every $\theta_0 \in \Theta$. The estimation error is quantified by the variance
\begin{eqnarray}
\Delta\hat{\theta}^2=\int p(x|\theta)(\hat{\theta}- \langle\hat{\theta}\rangle_{\theta})^2 dx ,
\end{eqnarray}
where $\langle\hat{\theta}\rangle_{\theta}=\int p(x|\theta)\hat{\theta}(x) dx$. 

In classical parameter estimation, a central role is played by the FI \cite{Cramer1999,Rao1992}: 
\begin{eqnarray}
I=4\int (\partial_\theta \sqrt{p(x|\theta)})^2 dx,
\end{eqnarray}
which measures the sensitivity of the probability distribution to changes in the parameter $\theta$. The well‑known Cramér–Rao inequality 
\begin{align}
   \Delta\hat{\theta}^2 \ge 1/I 
\end{align} relates the estimation variance to this first‑order sensitivity \cite{Cramer1999,Rao1992}. However, the standard Cram\'{e}r–Rao bound relies on certain regularity conditions and, because it involves only the first derivative of the likelihood, it may not capture the full statistical structure of the problem, especially in single‑copy or non‑asymptotic settings.

To overcome these limitations, we extend the estimation framework by incorporating higher‑order derivatives of the likelihood function. Specifically, we use the second-order derivative of the likelihood function, leading to a new measure of information as 
\begin{eqnarray}
I_2 \equiv 4 \int (\partial_\theta^2 \sqrt{p(x|\theta)})^2 dx.
\label{Eq-1}
\end{eqnarray}
When $\partial_\theta \ln p(x|\theta)=0$ at the true parameter value, $I_2$ reduces to the second moment of the second derivative of the log‑likelihood. In contrast, the standard FI corresponds to the negative expectation of that derivative. Rewriting $I_2$ in terms of the log‑likelihood yields
\begin{eqnarray}
I_2=\int p(x|\theta) \left[\partial^2_{\theta}\ln p(x|\theta))+\frac{1}{2}(\partial_{\theta}\ln p(x|\theta))^2\right]^2dx.  
\end{eqnarray}
This shows that $I_2$ captures both the first‑ and second‑order sensitivities of the log‑likelihood, providing a more nuanced measure of the information available about $\theta$.

Second-order information $I_2$ yields a tighter Cram\'{e}r-Rao lower bound on the variance of any (locally) unbiased estimator $\hat{\theta}$ (proof in Appendix \ref{Classical metrology}).
\begin{eqnarray}
    \Delta\hat{\theta}^2 \ge \frac{4\mathbb{E}_I ^2}{I_2},
    \label{CR1}
\end{eqnarray}
where $\mathbb{E}_I=\int (\hat{\theta}-\theta)(\partial_\theta \sqrt{p_\theta(x)})^2dx$. For point estimation, a sharper variant of the bound can be obtained:
$\displaystyle \Delta\hat{\theta}^2 \ge (\theta-\|\hat{\theta}\|_2)^2\, \frac{I^2}{4I_2}$, with $\|\hat{\theta}\|_2^2=\int \hat{\theta}(x)^2 dx$. 
For a differentiable and locally unbiased estimator $\hat{\theta}$ (i.e., $\mathbb{E}[\hat{\theta}]=\theta$ for $\theta \in [\theta_0-\epsilon,\theta_0+\epsilon]$), the bound~\eqref{CR1} holds.

The proposed uncertainty relationship of (\ref{CR1}) generalizes the CRB by combining the Fisher information and second-order information $I_2$, offering a novel framework for parameter estimation in scenarios where higher-order sensitivities are relevant \cite{stoica1998, Ben-Haim2010, cianchi2013}.

\section{Quantum metrology with high-order information}

To extend the high‑order information framework to quantum systems, we introduce a second‑order quantum sensitivity measure. For a parameterized density matrix $\rho_\theta$, we define
\begin{eqnarray}
I_2^q=4 \mathrm{Tr}\left(\partial_\theta^2 \sqrt{\rho_\theta}\right)^2.
\label{2nd-Qfisher}
\end{eqnarray}
This quantity is the natural quantum analogue of the classical second‑order information $I_2$ \eqref{Eq-1}. It can be evaluated via the spectral decomposition of $\rho_\theta$, or, equivalently, using the symmetric logarithmic derivative (SLD) formalism \cite{Braunstein1994}  (See Appendix \ref{Evaluation quantum I2}). 

The standard QFI is defined as \cite{Fisher1925, Helstrom1969}: $I^q=4\mathrm{Tr}\left(\frac{\partial \sqrt{\rho_\theta}}{\partial \theta}\right)^{2}$ and governs the best achievable precision in the asymptotic regime through the quantum Cram\'{e}r–Rao bound (QCRB). Here we derive a more refined, non‑asymptotic bound that also incorporates the second‑order sensitivity $I_2^q$.

Consider a measurement described by a positive operator‑valued measure (POVM) $\{E_x\}$ with outcomes $x\in\mathcal{X}$. For each outcome we assign an estimate $h(x)\in\mathbb{R}$, so that the overall estimation operator is $M=\sum_x h(x) E_x$. The estimator is said to be (locally) unbiased if $\mathrm{Tr}(M\rho_\theta)=\theta$ for the true value $\theta$. The mean‑squared error of the estimator is
\begin{eqnarray}
\Delta M^2:=\mathrm{Var}(\hat{\theta})= \mathrm{Tr}(M^2 \rho_\theta) -[\mathrm{Tr}(M\rho_\theta)]^2.    
\end{eqnarray}
The optimal performance of measurement and classical post-processing is then defined via
\begin{eqnarray}
\mathrm{Var}^\star_\theta:=\inf_{\{E_x\}, h} \Big\{\mathrm{Tr}\big(M^2 \rho_\theta\big)-\big[\mathrm{Tr}(M\rho_\theta)\big]^2
\Big\}
\end{eqnarray}
subject to the unbiasedness condition.

By incorporating the high-order correlation (\ref{2nd-Qfisher}), we show that the variance of the quantum estimator $M$ can be bounded. Our main result is the following lower bound on the variance of any unbiased quantum estimator.

\begin{theorem}
\label{thm:second-order-qbound}
For any unbiased estimator $M$ of the parameter $\theta$, we have 
\begin{eqnarray}
    \Delta M^2 \ge \frac{4\mathbb{E}_{I^q}^2}{I_2^q},
    \label{2fisherCR}
\end{eqnarray}
where $\mathbb{E}_{I^q}$ is defined by $\mathbb{E}_{I^q}={\rm Tr}((M-\theta)(\partial_\theta \sqrt{\rho_\theta})^2)$. 
\end{theorem}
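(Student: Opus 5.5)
The plan is to mimic the classical derivation of \eqref{CR1}, working throughout with the amplitude operator $A_\theta:=\sqrt{\rho_\theta}$ and the Hilbert--Schmidt inner product $\langle X,Y\rangle=\mathrm{Tr}(X^\dagger Y)$. Write $A=A_\theta$, $A'=\partial_\theta A$, $A''=\partial_\theta^2 A$. I read unbiasedness as holding in a neighbourhood of the true value, as in the classical discussion: $\mathrm{Tr}(M\rho_{\theta'})=\theta'$ for all $\theta'$ near $\theta$. Here $M=\sum_x h(x)E_x$ is fixed and $\theta$-independent. I also assume enough smoothness to differentiate twice under the trace. The identity to exploit is
\begin{eqnarray}
f(\theta') := \mathrm{Tr}\big((M-\theta')A_{\theta'}^2\big)=0
\end{eqnarray}
for all $\theta'$ near $\theta$, which uses $\mathrm{Tr}\rho_{\theta'}=1$.

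\textbf{Step 1: first derivative.} Differentiating $f$ once gives
\begin{eqnarray}
-\mathrm{Tr}(A^2)+\mathrm{Tr}\big((M-\theta)(A'A+AA')\big)=0 .
\end{eqnarray}
This is the usual first-order condition. It is not needed directly, but it records that $\mathrm{Tr}(A'A+AA')=\partial_\theta\mathrm{Tr}\rho_\theta=0$.

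\textbf{Step 2: second derivative.} Differentiating again gives
\begin{eqnarray}
-2\,\mathrm{Tr}(A'A+AA')+\mathrm{Tr}\big((M-\theta)(A''A+2A'^2+AA'')\big)=0 .
\end{eqnarray}
The first term vanishes by Step 1. Using the definition of $\mathbb{E}_{I^q}$, this leaves
\begin{eqnarray}
\mathrm{Tr}\big((M-\theta)(A''A+AA'')\big)=-2\,\mathbb{E}_{I^q}.
\end{eqnarray}
By cyclicity and hermiticity of $A$, $A''$ and $M$, the left-hand side equals $2\,\mathrm{Re}\,\mathrm{Tr}\big(A(M-\theta)A''\big)=2\,\mathrm{Re}\langle (M-\theta)A,\,A''\rangle$. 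Hence
\begin{eqnarray}
\mathbb{E}_{I^q}=-\mathrm{Re}\langle (M-\theta)A,A''\rangle .
\end{eqnarray}

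\textbf{Step 3: Cauchy--Schwarz.} Applying Cauchy--Schwarz to the last expression gives
\begin{eqnarray}
\mathbb{E}_{I^q}^2\le \mathrm{Tr}\big(A(M-\theta)^2A\big)\,\mathrm{Tr}(A''^2).
\end{eqnarray}
The first factor is $\mathrm{Tr}\big((M-\theta)^2\rho_\theta\big)$, which equals $\Delta M^2$ because $\mathrm{Tr}(M\rho_\theta)=\theta$. The second factor is $I_2^q/4$ by \eqref{2nd-Qfisher}. Rearranging yields \eqref{2fisherCR}. (If $I_2^q=0$ then $A''=0$, so $\mathbb{E}_{I^q}=0$ and the statement is vacuous.)

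\textbf{Main obstacle.} The delicate point is Step 2, which needs unbiasedness to second order. Pointwise unbiasedness at $\theta$ alone is not enough, so I would state the hypothesis as local unbiasedness on an interval, exactly as in the classical case. Alternatively, it suffices to assume the two derivative conditions $\partial_\theta\mathrm{Tr}(M\rho_\theta)=1$ and $\partial_\theta^2\mathrm{Tr}(M\rho_\theta)=0$. The other point needing care is justifying differentiation of $\sqrt{\rho_\theta}$ twice. This is fine when the spectral decomposition is smooth, and it is handled by the evaluation method of Appendix \ref{Evaluation quantum I2}.
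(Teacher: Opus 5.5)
Your proof is correct and follows the paper's argument. You differentiate the unbiasedness condition twice and use $\partial_\theta^2\rho_\theta=A''A+2A'^2+AA''$ to get $\mathrm{Tr}\big(\hat M(A''A+AA'')\big)=-2\,\mathbb{E}_{I^q}$. You then apply Cauchy--Schwarz with $\mathrm{Tr}(\hat M^2\rho_\theta)=\Delta M^2$ and $\mathrm{Tr}(A''^2)=I_2^q/4$. The only difference is cosmetic: you merge the two cross terms into $2\,\mathrm{Re}\langle \hat M A, A''\rangle$ and apply Cauchy--Schwarz once, while the paper uses the triangle inequality and bounds each term separately; both give the same constant.
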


\textit{Proof}. Note that ${\partial_{\theta} \rho_\theta}={\partial_{\theta} \sqrt{\rho_\theta}} \sqrt{\rho_\theta}+\sqrt{\rho_\theta}{\partial_{\theta} \sqrt{\rho_\theta}}$. By taking the second derivative of parameter $\theta$, we obtain that 
\begin{eqnarray}
  {\partial^2_{\theta} \rho_\theta}  
 &=& ({\partial^2_{\theta}\sqrt{\rho_\theta}})\sqrt{\rho_\theta} + 2({\partial_{\theta}\sqrt{\rho_\theta}})^2 + \sqrt{\rho_\theta}{\partial^2_{\theta}\sqrt{\rho_\theta}}
 \label{B-2}
\end{eqnarray}
Given an unbiased measurement $M$, i.e., ${\rm Tr}(M\rho_\theta)=\theta$ for $\theta\in \Omega$, we have ${\rm Tr}(M{\partial^2_{\theta}\rho_\theta})=0$. For any quantum state $\rho_\theta$ we have ${\rm Tr}\rho _\theta=1$ and ${\rm Tr} \partial^2_{\theta}\rho_\theta= 0$. This implies that
\begin{eqnarray}
\label{B-5}
{\rm Tr}(\hat{M}{\partial^2_{\theta}\rho_\theta} )=0
\end{eqnarray}
with $\hat{M}=M-\theta$. 

According to the triangle inequality $|x+y|\leq |x|+|y|$ we obtain from Eq.(\ref{B-2}) that 
\begin{eqnarray}
|\mathrm{Tr}
(\hat{M}(\partial^2_{\theta}\sqrt{\rho_\theta})\sqrt{\rho_\theta})| 
    &+& | {\rm Tr}(\hat{M}\sqrt{\rho_\theta}{\partial^2_{\theta}\sqrt{\rho_\theta}})| \notag \\
&+& 2|{\rm Tr}(\hat{M}({\partial_{\theta}\sqrt{\rho_\theta}} )^2)| \ge 0
\label{B-7}
\end{eqnarray}

Using the Cauchy-Schwarz inequality of $|{\rm Tr}(AB)|\le \sqrt{{\rm Tr}(AA^\dag){\rm Tr}(BB^\dag)}$ for two matrices $A$ and $B$ in the first term in the left side of Eq.(\ref{B-7}), we obtain that 
\begin{eqnarray}
\left|{\rm Tr}(\hat{M}(\partial^2_{\theta}\sqrt{\rho_\theta})\sqrt{\rho_\theta})\right|^2
 &\leq & {\rm Tr}(\sqrt{\rho_\theta}\hat{M}\hat{M}^\dag(\sqrt{\rho_\theta})^\dag)
   {\rm Tr}({\partial^2_{\theta}\sqrt{\rho_\theta}} )^2 
 \nonumber \\
 &=& {\rm Tr}(\rho_\theta\hat{M}^2){\rm Tr}({\partial^2_{\theta}\sqrt{\rho_\theta}}))^2
 \label{B-9}
\end{eqnarray}

Moreover, we have ${\rm Tr}(\rho_\theta\hat{M}^2)={\rm Tr}(\rho_\theta M^2)-({\rm Tr}\rho_\theta M)^2=\Delta M^2$. Combining with Eq.(\ref{B-9}), we obtain that
\begin{eqnarray}
\left|{\rm Tr}(\hat{M}{\partial^2_{\theta}\sqrt{\rho_\theta}}\sqrt{\rho_\theta})\right|^2 
\le \frac{1}{4} \Delta{}M^2 I_2^q
\label{B-13}
\end{eqnarray}

The second item on the left side of Eq.(\ref{B-7}) can be calculated in the same way as
\begin{eqnarray}
\left|{\rm Tr}(\hat{M}\sqrt{\rho_\theta}{\partial^2_{\theta}\sqrt{\rho_\theta}})\right|^2
 \leq  {\rm Tr}(\hat{M}^2\rho_\theta) {\rm Tr}({\partial^2_{\theta}\sqrt{\rho_\theta}})^2
 \label{B-15}
\end{eqnarray}

Combining Eqs.(\ref{B-7},\ref{B-13},\ref{B-15}) implies the first result as 
\begin{eqnarray}
    \Delta M^2 \ge \frac{4\mathbb{E}[I^q]^2}{I_2^q} 
    \label{B-16}
\end{eqnarray}

For the special case of estimating a parameter at a fixed point, the last item in Eq.(\ref{B-7}) can be rewritten into 
\begin{eqnarray}
{\rm Tr}(\hat{M}({\partial_{\theta}\sqrt{\rho_\theta}} )^2) 
 &\le& \sqrt{{\rm Tr}M^2} \sqrt{{\rm Tr}({\partial_{\theta}\sqrt{\rho_\theta}} )^4} -\frac{\theta}{4}I^q  \label{B-17}
 \nonumber\\
&\le& \|M\|_F \sqrt{({\rm Tr}({\partial_{\theta}\sqrt{\rho_\theta}} )^2 )^2}-\frac{\theta }{4}I^q
 \nonumber\\
  &=& \frac{\|M\|_F- \theta}{4} I^q,
  \label{B-19}
\end{eqnarray}
where the inequalities are based on the matrix inequality of $({\rm Tr}A^2)^2\geq {\rm Tr}A^4$ for any matrix $A$.

Finally, from Eqs. (\ref{B-7},\ref{B-19}), we obtain the second result as
\begin{eqnarray}
  \Delta M^2\geq(\theta-\|M\|_F)^2\frac{(I^q)^2}{4I_2^q}.
  \label{B-21}
\end{eqnarray}
This has completed the proof.

The bound \eqref{2fisherCR} is saturated under the following sufficient conditions:
\begin{enumerate}
    \item $\mathrm{Tr}\left(\hat{M}\left(\partial_\theta\sqrt{\rho_\theta}\right)^2\right)\ge0$, where $\hat{M}=M-\theta$;
\item There exists a positive real constant $\lambda>0$ such that $\sqrt{\rho_\theta}\hat{M}=\hat{M}\sqrt{\rho_\theta}=\lambda \partial_\theta^2\sqrt{\rho_\theta}$;

\item The first-order derivative operator is of rank one, i.e., $\partial_\theta\sqrt{\rho_\theta}=|\psi\rangle\langle\psi|$ for some pure state $|\psi\rangle$;

\item For bound shown in Eq.(\ref{B-21}), the matrix equality condition $\left(\mathrm{Tr}A^2\right)^2=\mathrm{Tr}A^4$ holds for $A=\partial_\theta\sqrt{\rho_\theta}$.
\end{enumerate}

These conditions describe a class of estimation problems where the optimal estimator exploits the higher‑order structure of the parametric family, leading to a non‑Gaussian distribution of the estimates.

Compared with the QCRB for characterizing an unbiased estimator \cite{Helstrom1969, Braunstein1994}, the inequality (\ref{2fisherCR}) provides an approach to feature estimation accuracy by incorporating both the first-order information (the Fisher information) and second-order information. This framework enables a more refined analysis of parameter estimation, and can achieve better precision than the well-known results in quantum metrology \cite{Helstrom1969, Braunstein1994, QCRB2011Holevo, VGbound2012}, see the following example of quantum phase estimation.

\section{Thermodynamic metrology with high-order information}

Thermodynamic constraints provide a natural setting to investigate the fundamental limits of quantum systems, where coherence plays a central role in determining performance and resource costs \cite{Chu2022, miller2018, Marvian2022}. In this context, we explore how high-order information can be employed to analyze thermodynamical parameter estimations.

Consider a thermal state $\rho=\frac{1}{Z}e^{-\beta H}$, where $\beta=\frac{1}{k_B T}$ is the inverse temperature (with Boltzmann constant $k_B$), and $Z={\rm Tr}e^{-\beta H}$ is the partition function. The main goal is to estimate the temperature that generally involves the heat capacity $C_v$, defined as: $C_v=\partial_T \langle H \rangle=\frac{\Delta H^2}{T^2}$, where $\Delta H^2$ denotes the variance of the Hamiltonian, and $k_B=1$ \cite{Masanes2017,Liu2020}. Note the QFI for temperature estimation is directly proportional to the specific heat: $I^q=\frac{1}{T^4} \Delta H^2=\frac{1}{T^2} C_v$ \cite{Masanes2017,Liu2020}. Suppose the Hamiltonian $H$ is an unbiased measurement for extracting the temperature of a given thermal state. By applying the definition of the second-order information (\ref{2nd-Qfisher}) and the inequality (\ref{2fisherCR}), we show a new relation of 
the heat capacity and high-order information as
\begin{eqnarray}
C_v^3 \ge \frac{|T -\| H\|_F|^2}{4T^6I_2^q}.
\label{eqn-13}
\end{eqnarray} 
This result enhances understanding of how quantum systems dissipate or retain energy under thermal fluctuations by correlating the quantum information with thermodynamic properties.

Now, we show the inequality (\ref{eqn-13}). For a thermal state, we have $e^{-\frac{1}{T}H}=Z\rho$, \\
$\partial_T e^{-\frac{1}{T}H}=\frac{H}{T^2}Z\rho$, and $ \partial_T {\rm Tr}e^{-\frac{1}{T}H}=\frac{Z}{T^2} \langle H\rangle$. This implies that 
\begin{eqnarray}
    \partial_T \rho   
    &=& -\frac{1}{Z^2} \partial_T ({\rm Tr}e^{-\frac{1}{T}H} ) e^{-\frac{1}{T}H}+\frac{1}{Z}\frac{H}{T^2}e^{-\frac{1}{T}H}
    \nonumber
    \\
    &=&\frac{1}{T^2}(H-\langle H\rangle)\rho,
    \label{partialT} 
\end{eqnarray}
where $\langle H\rangle$ denotes the expect of the Hamiltonian $H$ with respect to the thermal state, i.e., $\langle H\rangle={\rm Tr}(H\rho)$. The heat capacity \cite{Marvian2022,QFIMReview2020} is then given by
\begin{eqnarray}
C_v&=&\partial_{T}\langle H\rangle\nonumber
=\frac{1}{T^2} \Delta H^2,
\end{eqnarray} 
where $\Delta H^2$ denotes the variance of the Hamiltonian and $\langle H^k\rangle$ denotes the expect of the operator $H^k$ with respect to the thermal state, i.e., $\langle{H^k}\rangle={\rm Tr}(H^k\rho)$. 

By taking the second derivative of $\rho$ with respect to the temperature $T$, it follows that 
\begin{eqnarray}
    \partial^2_T \rho =\frac{1}{T^4}(\hat{H}^2 -\Delta H^2)\rho-\frac{2}{T^3}\hat{H}\rho    
    \label{F6}
\end{eqnarray}
with $\hat{H}=H-\langle H\rangle$. Since the Hamiltonian $H$ of the defined thermal state is temperature-independent, $H$ and $\rho$ are commutative. The second-order quantum information can be written as
\begin{eqnarray}
I_2^q 
&=&\frac{1}{4T^8} \Delta H^4-\frac{2}{T^7}\Delta H^3 + \frac{4}{T^6}\Delta H^2,
\label{59Iq}
\end{eqnarray}
where $\Delta H^k={\rm Tr} (\rho \hat{H}^k)$.

Finally, suppose the Hamiltonian $H$ defines an unbiased measurement for extracting the temperature of a given thermal state. The second-order quantum information, alongside temperature and heat capacity, satisfies the following inequality
\begin{eqnarray}
C_v^3 \ge \frac{|T-\|H\|_F|^2}{4T^6I_2^q}.
\label{eqn-13}
\end{eqnarray} 
This shows how quantum fluctuations are scaled with temperature in thermal states. It provides a new framework to investigate the thermodynamic properties of quantum systems using quantum metrology.

\section{Application: Quantum phase estimation}

This section demonstrates the practical utility of the proposed theoretical bound through a central problem in quantum metrology: estimating an unknown phase parameter $\theta$ encoded in a quantum state. We provide a detailed analysis for the single-qubit case and discuss its theoretical extension to multiple qubits. To validate the theory, we also present an experimental implementation based on single-qubit measurements. Joint measurements required for the multi-qubit case will be addressed in future work due to current experimental constraints.

For benchmarking, we compare our new bound Eq.~\eqref{2fisherCR} against established fundamental limits in frequentist estimation: (1) QCRB \cite{HelstromBOOK,BraunsteinPRL1994}, (2) the lowest order of quantum Barankin bounds, i.e., the Quantum Hammersley-Chapman-Robbins Bound (QHCRB) \cite{Barankin1949, Chapman1951, Gessner2023}, and (3) the lowest order of quantum Abel bounds (here we denote as QAB) \cite{Abel1993, Gessner2023}.

\subsection{Theoretical analysis for single-qubit case}

Consider a single qubit initialized in the state $\rho_0=\frac{1}{2}(\mathbbm{1} + \mathbf{r}_0 \cdot \boldsymbol{\sigma})$, where $\mathbf{r}_0=(r_x, r_y, r_z)$ is its Bloch vector and $\boldsymbol{\sigma}=(\sigma_x, \sigma_y, \sigma_z)$ is the vector of Pauli matrices. The qubit undergoes a unitary rotation $U(\theta)=e^{-i\mathbf{n} \cdot \boldsymbol{\sigma} \theta/2}$ about an axis defined by the unit vector $\mathbf{n}$. The final state is $\rho_\theta=U(\theta)\rho_0 U^\dagger(\theta)=\frac{1}{2}(\mathbbm{1}+\mathbf{r}_\theta \cdot \boldsymbol{\sigma})$, with the rotated Bloch vector $\mathbf{r}_\theta=\cos\theta(\mathbf{r}_0-(\mathbf{n} \cdot \mathbf{r}_0) \mathbf{n})+(\mathbf{n} \cdot \mathbf{r}_0) \mathbf{n}+ \sin \theta( \mathbf{n} \times \mathbf{r}_0)$. The goal is to estimate the phase parameter $\theta$. 

To compute the new bound (\ref{2fisherCR}), it is convenient to work in the Bloch representation. The first and second derivatives of $\mathbf{r}_\theta$ are
\begin{eqnarray}
\partial_\theta \mathbf{r}_\theta &=&-\sin \theta ( \mathbf{r}_0-(\mathbf{n} \cdot \mathbf{r}_0) \mathbf{n})+\cos \theta (\mathbf{n} \times \mathbf{r}_0),
\label{r_theta'}
\\
\partial^2_\theta \mathbf{r}_\theta &=&-\cos \theta ( \mathbf{r}_0-(\mathbf{n}\cdot \mathbf{r}_0) \mathbf{n})-\sin \theta (\mathbf{n} \times \mathbf{r}_0).
\label{r_theta''}
\end{eqnarray}

Note the density matrix can be decomposed as $\rho_\theta=\sum_{\pm} \lambda_\pm |\psi_\pm\rangle\langle\psi_\pm|$, with eigenvalues $\lambda_\pm=(1 \pm |\mathbf{r}_\theta|)/2$ and eigenprojectors $|\psi_\pm\rangle\langle\psi_\pm|=\frac{1}{2}(\mathbbm{1} \pm \mathbf{r}_\theta \cdot \boldsymbol{\sigma} / |\mathbf{r}_\theta|)$. 

Define
\begin{equation}
A_\theta=\sqrt{\frac{1+|\mathbf{r}_\theta|}{2}}+\sqrt{\frac{1-|\mathbf{r}_\theta|}{2}},
B_\theta=\sqrt{\frac{1+|\mathbf{r}_\theta|}{2}}-\sqrt{\frac{1-|\mathbf{r}_\theta|}{2}}.
\nonumber
\end{equation}
This further implies:
\begin{align}
\sqrt{\rho_\theta}
&=\frac{1}{2}\left[A_\theta \mathbbm{1}+B_\theta\frac{\mathbf{r}_\theta}{|\mathbf{r}_\theta|}\cdot
\boldsymbol{\sigma}\right],
\label{sqrt-rho}
\\
\partial_\theta \sqrt{\rho_\theta}&=\frac{B_\theta}{2|\mathbf{r}_\theta|}\partial_\theta \mathbf{r}_\theta\cdot\boldsymbol{\sigma},
\label{partial-sqrt-rho}
\\
\partial_\theta^2 \sqrt{\rho_\theta}&=\frac{B_\theta}{2}\left(-\sin\theta\sigma_x-
\cos\theta\sigma_y\right).
\label{partial2}
\end{align}

We numerically analyze the performance of the new bound for a generic rotation axis parameterized as $\mathbf{n}=(\cos\vartheta \sin\phi, \sin\vartheta \sin\phi, \cos\phi)^\top$, where $\vartheta$ and $\phi$ vary over the Bloch sphere. Figure~\ref{3D-bound} displays the resulting lower bound on the estimation variance. Singularities (divergences of the bound) occur when the rotation axis aligns with the initial Bloch vector, because then the state remains unchanged under the rotation and the parameter $\theta$ becomes unidentifiable.

\begin{figure}
    \centering  \includegraphics[width=1\linewidth]{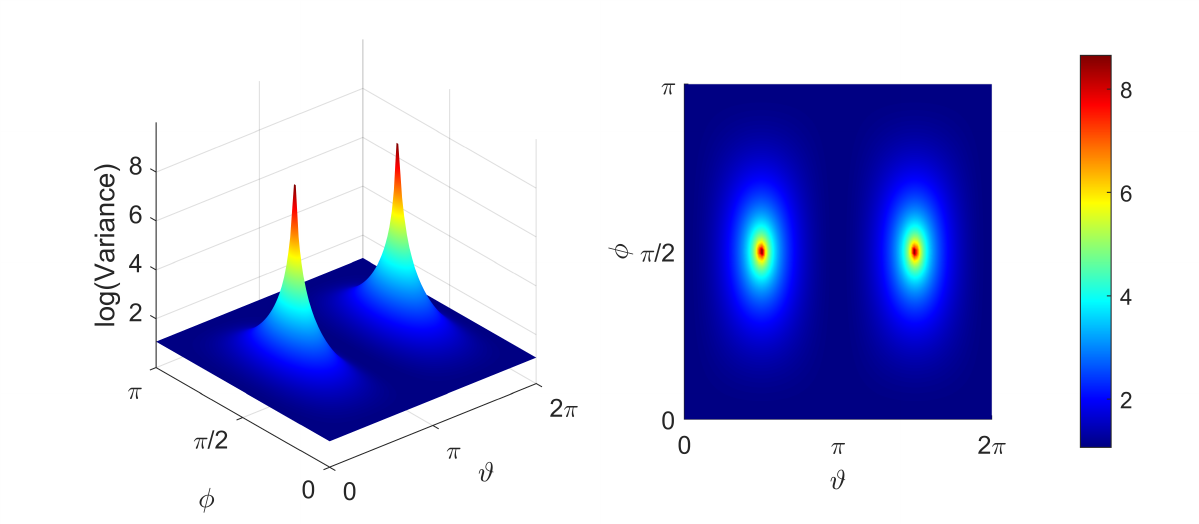}
    \caption{
    Lower bound on the estimation variance for phase estimation with a single qubit under arbitrary axis rotation. It illustrates how the bound varies with the rotation axis parameters $(\vartheta, \phi)$. Singularities correspond to parameters for which the state completes a full period.}
    \label{3D-bound}
\end{figure}

To explicitly compare our bound with the QCRB, we examine its behavior as a function of the initial state purity (via $|\mathbf{r}_0|$) and the rotation axis orientation. Figure~\ref{bound-vs-qcrb}(a1) and (b1) show how the new bound varies with these parameters. The ratio of our new bound to the CRB, plotted in Figure ~\ref{bound-vs-qcrb} (a2) and (b2), demonstrates that the gap between the two bounds is not constant but depends smoothly on the estimation problem's specifics. When this ratio exceeds 1, our bound provides a tighter lower bound than QCRB. This dependence also highlights the role of higher-order information captured by our higher-order new bound but not by the QCRB.

\begin{figure}
    \centering
    \includegraphics[width=1\linewidth]{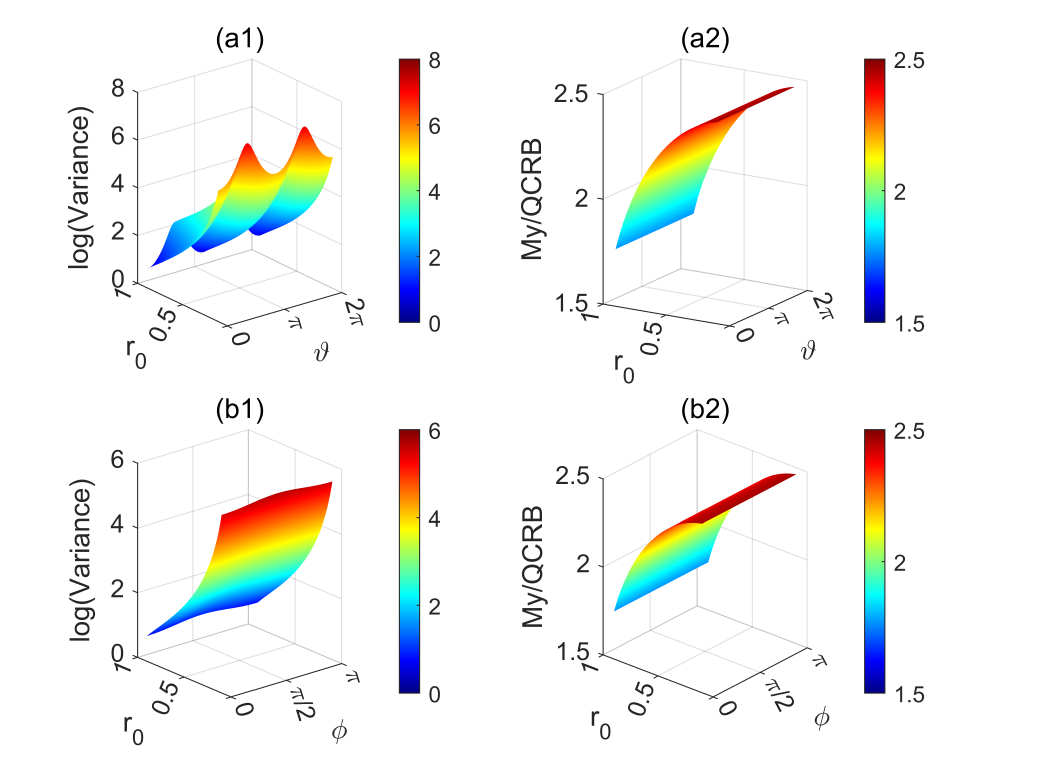}
    \caption{Comparison between the proposed bound and the QCRB for single-qubit phase estimation. (a1, b1): New bound as a function of parameter pairs ($r_0,\theta$), ($r_0,\phi$). (a2, b2): The ratio of our new bound to the CRB.Panels (a1) and (a2) show the results as functions of the Bloch-vector magnitude $r_0$ and the azimuthal angle $\vartheta$, while panels (b1) and (b2) show the results as functions of $r_0$ and the polar angle $\phi$.}
    \label{bound-vs-qcrb}
\end{figure}

\subsection{Theoretical analysis for the multi-qubit case}

To analyze the asymptotic behavior, we analytically compare the bounds for m independent copies of the same qubit, denoted as $ \rho^{\otimes m}$. The present high-order information is not additive, but repeating joint measurements can reduce the average estimation error, as illustrated in Figure~\ref{FigureS2}. Here, we optimize the bounds over general joint measurement operators. The measurement operator $M$ is chosen as $M=|s\rangle \langle s|$, where $|s\rangle=\frac{1}{\sqrt{m}} \sum_{i=1}^m |\psi_i\rangle$, and $|\psi_i\rangle $ denotes the $i$-th qubit is in the state $|1\rangle$ and all the other qubits are in the state $|0\rangle$. This demonstrates that the present bound is tighter than other bounds before reaching the asymptotic limit, where the maximum-likelihood estimator asymptotically converges.

Further more, we compare with the shot-noise limit (SQL) and the Heisenberg limit (HL) as benchmark scalings with no prefactors introduced. As a numerical result, the proposed bound and the reference bounds exhibit variances above these two benchmarks for a finite number of measurements and show a tendency to asymptotically approach the SQL/HL trends as the number of measurements increases. This behavior is reasonable under the combined constraints of the non-additive higher-order information measure adopted in this work and the choice of a GHZ joint measurement.

\begin{figure}
    \centering
\includegraphics[width=\linewidth]{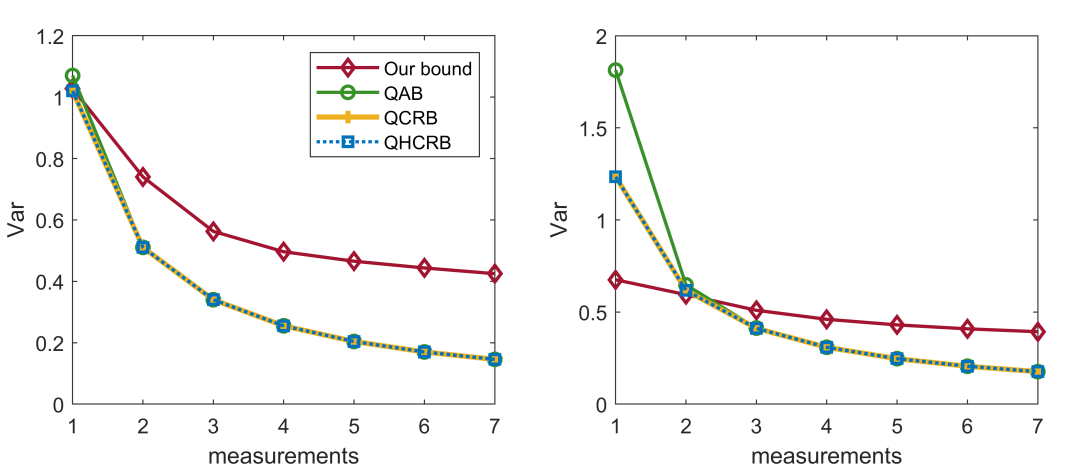}
    \caption{The variance of $m$-shot scenarios by using $m$ independent copies of the same qubit. Here we use GHZ operators for optimal joint measurements to balance experimental feasibility with theoretical generality. The initial Bloch vector is chosen as $\mathbf{r}=(0,r_0,0)$ and set $r_0=0.99$. The rotation axis is $\mathbf{n}=(0,0,1)$. In the $m$-shot case, the phase is set as $-\frac{\pi}{3}$. }
    \label{FigureS2}
\end{figure}

\subsection{Experimental validation for single-qubit case}

\subsubsection{Experimental setup}

\begin{figure*}[htbp]
\begin{center}
\includegraphics[width=0.8\textwidth]{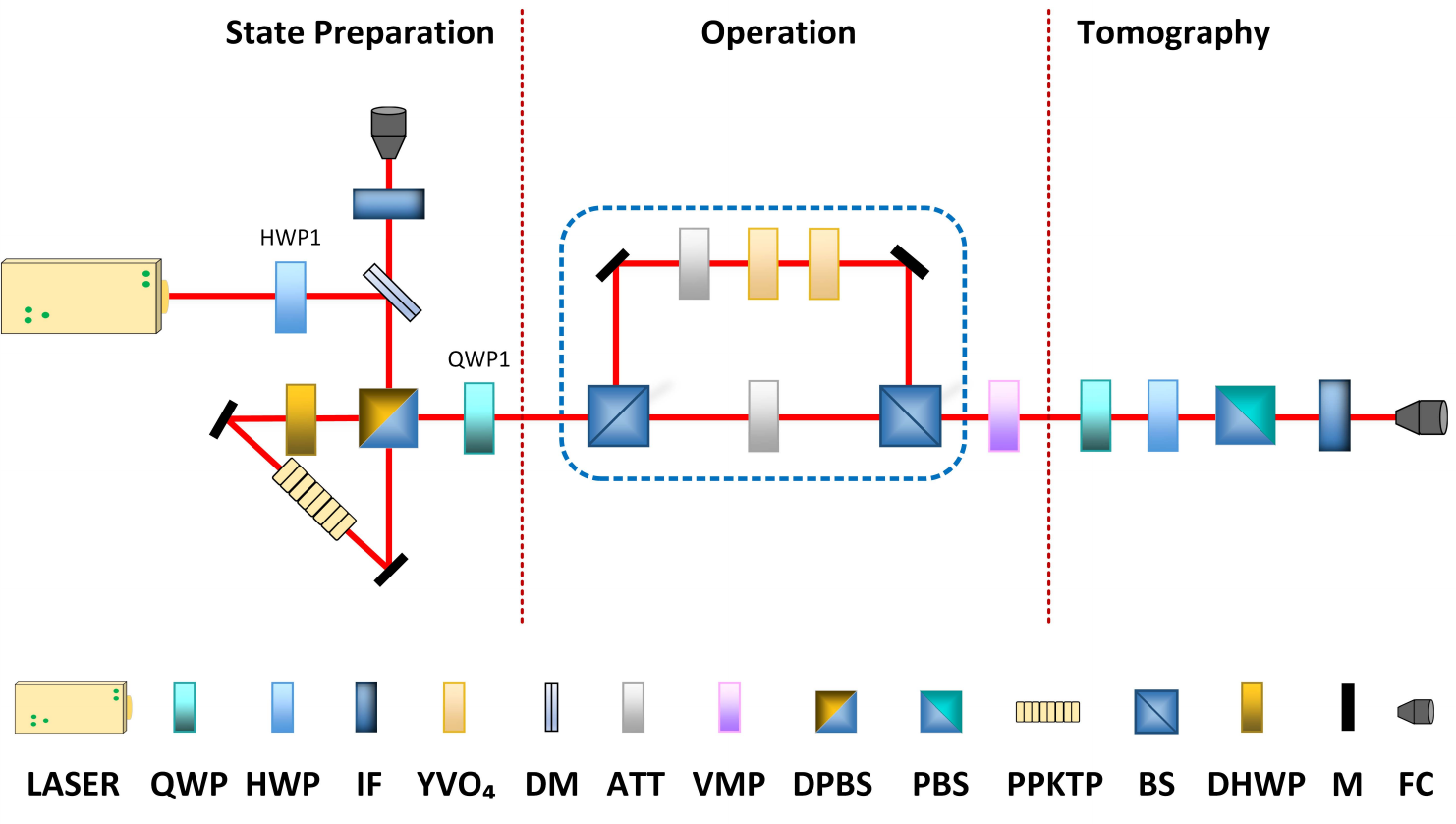}
\caption{Experimental setup, which includes three modules: (a) state preparation module, (b) white-noise insertion and phase-shift control module, (c) quantum state tomography module. In module (a), a pair of maximally polarization-entangled states is generated by a spontaneous parametric down-conversion process at the dual-wavelength polarization beam splitter (DPBS). A quarter-wave plate (QWP1) is then used to prepare the initial single-photon state $|R\rangle=\frac{1}{\sqrt{2}}(|H\rangle + i|V\rangle)$. To produce the single-photon mixed state, the attenuators (ATTs) are used to regulate the mixed weight $p$ of the states. The density matrix is constructed via the quantum state tomography module (c). Key optical elements include half-wave plate (HWP), interference filter (IF), yttrium orthovanadate (YVO$_4$), dichromatic mirror (DM), full-wave liquid-crystal variable wave plate (VWP), polarizing beam splitter (PBS), beam splitter (BS), mirror (M),  dual-wavelength half-wave plate (DHWP), and fiber coupler (FC).} 
\label{setupz}
\end{center}
\end{figure*}

The experiment was conducted using the photonic platform shown in Figure \ref{setupz}. The experimental setup consists of three primary components: (1) photon source, (2) white-noise insertion and phase-shift control, and (3) quantum tomography. First, the state-preparation module generates an initial pure state $|\Phi\rangle$. In the white-noise module (dashed blue box), the reflected path dephases one photon into a fully mixed state $\frac{1}{2}\mathbb{I}$. The attenuators (ATTs) are employed to adjust the weight $p$ of the mixed state. The controllable rotation $U(\theta)$ is then introduced using a full-wave liquid-crystal variable wave plate (VWP). In the last part, photon-product measurements are performed by setting the angles of the optical axes of the half-wave plate (HWP) and the quarter-wave plate (QWP). 

In the state preparation module shown in Figure \ref{setupz}, a type-II phase-matched periodically poled $\rm{KTiOPO_4}$ (PPKTP) crystal with dimensions $15 \times 2 \times 1$ mm$^3$ is pumped with a 36 mW diode laser beam to generate photon pairs with a central wavelength of 810 nm via the process of spontaneous parametric down-conversion (SPDC). HWP1 in front of the PPKTP crystal is used to control the generated photon pairs, which are encoded in the polarization degree of freedom as $\frac{1}{\sqrt{2}}(|HV\rangle + |VH\rangle)$. One photon acts as a trigger, and the other serves as a signal photon which is further transformed into the state $|R\rangle=\frac{1}{\sqrt{2}}(|H\rangle + i|V\rangle)$ by QWP1. 
 
We experimentally generate a set of noisy states
\begin{eqnarray}
\rho(p)=p |R\rangle\langle R| + \frac{1-p}{2}\mathbb{I}
\end{eqnarray}
where $\mathbb{I}$ denotes the identity operator of rank $2$. As shown in Figure \ref{setupz}, two 50/50 beam splitters (BS) are inserted into the signal branch of the optical setup. In the transmitted path, the photon is prepared in the state $|R\rangle$. In the reflected path, two 2.6 mm yttrium orthovanadate (YVO$_4$) crystals dephase the single-photon state into the maximally mixed state $\frac{\mathbb{I}}{2}$. The ratio of these two components is controlled by two attenuators (ATTs) at the output port of the second BS. After the insertion of white noise, a controllable phase shift $\theta$ is introduced using a full-wave liquid crystal variable wave plate (VWP), which results in the final state $\rho(p,\theta)=U(\theta) \rho(p) U^\dagger(\theta)$, where the rotation is parameterized with a unitary matrix $U(\theta)=e^{-i\mathbf{n} \cdot \boldsymbol{\sigma}\theta/2}$. The VWP enables rapid phase adjustments, with $\theta$ switching within approximately 2 ms, allowing for real-time adaptive feedback. This photon source achieves high brightness (0.34 MHz) and a collection efficiency of $60\%$.

Tomographic measurements are conducted on single signal photons, employing a post-selection strategy based on 100 sets of photon coincidences. The measurement results are recorded in coincidence with the trigger photons. Utilizing a 3 nm interference filter, the photon source generates up to 15,000 coincidence counts per second. The conditional probability $P(a|x)$ of the outcome $a \in \{-1,+1\}$ is calculated as 
\begin{eqnarray}
p(a|x)=\frac{N_{x}^{a}}{N_{x}^{-1} + N_{x}^{+1}}
\end{eqnarray}
where $N_{x}^{a}$ denotes the photon coincidence counts. The density matrix is reconstructed using maximum-likelihood estimation \cite{ap2006}.  

\subsubsection{Experiment results}

We first constructed a set of mixed states $\rho(p)$ by setting different mixture probabilities $\{p,1-p\}$ for photons from two paths in the second module. The prepared state of a single photon can be represented as $\rho_0=\frac{1}{2}(\mathbb{I}+\mathbf{r}_0\cdot \boldsymbol{\sigma})$. In this example, the initial Bloch vector is chosen as $\mathbf{r}=(0,r_0,0)$ with $r_0\in[0,1]$ and $\mathbf{n}$ is the rotation axis. After the preparation stage, we implemented a phase rotation with $\theta=\pi$ along the Pauli $z$-axis, and obtained the output state as $\rho_\theta=\frac{1}{2}(\mathbb{I}+\mathbf{r}_\theta\cdot \boldsymbol{\sigma})$ with $\mathbf{r}_\theta =(-r_0\sin\theta, r_0\cos\theta,0)$. 

To experimentally estimate the phase, we first performed the Pauli measurements $\sigma_x$, $\sigma_y$, and $\sigma_z$ on each prepared photon. We obtained the Pauli vector as $\mathbf{r}_\theta =(\langle \sigma_x\rangle, \langle \sigma_y\rangle, \langle \sigma_z\rangle)$ and further constructed the density matrices of the encoded states. We compared the present bound with well-known quantum bounds, including QCRB \cite{Helstrom1969}, quantum Hammersley-Chapman-Robbins bound (QHCRB) \cite{Barankin1949,Hammersley1950,Chapman1951,Bhattacharyya1946}, and quantum Abel bound (QAB) \cite{Abel1993,Gessner2023}. 

To verify the present bound using the high-order information, we evaluated the second-order information of the phase according to Eq.(\ref{2nd-Qfisher}). We finally evaluated the variance according to Eq.(\ref{2fisherCR}) by computing the bias parameter $\mathbb{E}[I^q]$, which can be achieved by choosing the Pauli measurements. The error bar is evaluated using the experimental data.

\begin{figure}
    \centering
\includegraphics[width=\linewidth]{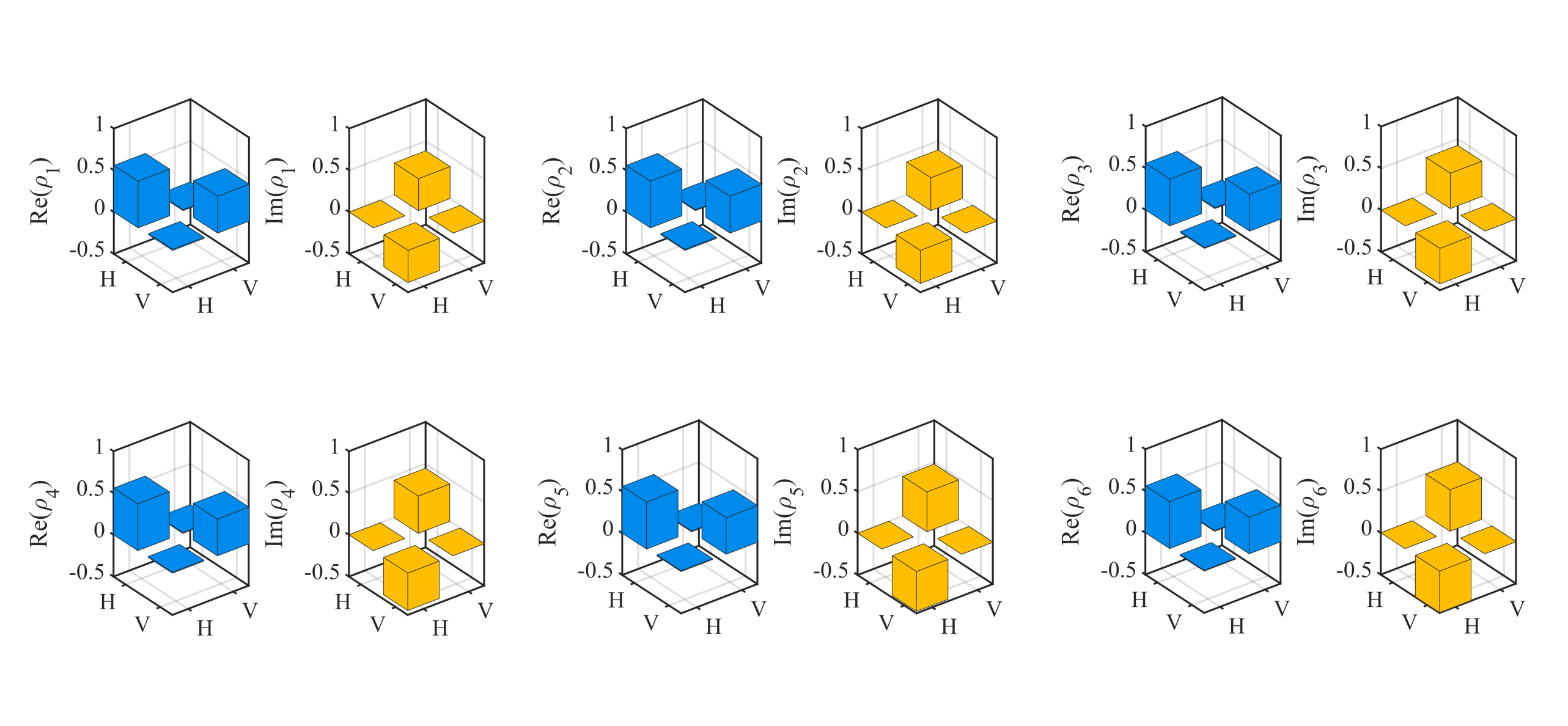}
    \caption{The tomographic results for all the experimental photon states. }
    \label{tomo}
\end{figure}

\begin{table}
    \caption{Fidelity of experimental single qubit states.}
    \centering
    \begin{tabular}{|c|c|c|c|}
    \hline 
    $r_0$ &   0.7500   &  0.8000   &  0.8500 
 \\
    \hline
        Fidelity 
        & 0.9953 $\pm$ 0.0024 
        & 0.9956 $\pm$ 0.0038     
        & 0.9954 $\pm$ 0.0030  
        \\
        \hline
$r_0$ &   0.9000  & 0.9500   &  1.0000
\\
\hline
       Fidelity   & 0.9953 $\pm$ 0.0049  
        & 0.9949 $\pm$ 0.0032 
        & 0.9955 $\pm$ 0.0036
         \\
        \hline
    \end{tabular}

    \label{fid}
\end{table}

\begin{figure}
    \centering   
    \includegraphics[width=0.8\linewidth]{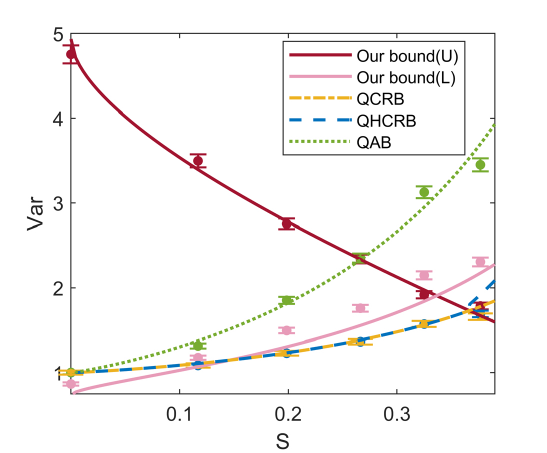}
    \caption{
   Hierarchic bounds of the variance in terms of the length of Pauli vector $\mathbf{r}_0$ in both the ideal case (lines) and experimental case (dots). We verify the hierarchic quantum bounds including QCRB (yellow dashed line) \cite{Helstrom1969}, quantum Hammersley-Chapman-Robbins bound (QHCRB, blue dashed line) \cite{Barankin1949,Hammersley1950,Chapman1951,Bhattacharyya1946}, and quantum Abel bound (QAB, green dotted line) \cite{Abel1993,Gessner2023}, and the present bound (red and pink lines). For the red line, we choose Pauli x measurement and get the value of $\{\mathbb{E}_{I^q}\}$ at each data point. For the pink line, we constrain the value of $\mathbb{E}_{I^q}$ as the average of $\{\mathbb{E}_{I^q}\}$ at all data points. The initial Pauli vector is choose as $\mathbf{r}_0=(0,r_0,0)$ and the rotation is $U(\pi)$ along the Pauli $z$-axis.}
   \label{fig4}
\end{figure}

The quantum state tomography is shown in Figure \ref{tomo}, and fidelity of each quantum state for computing the hierarchical frequentist bounds is shown in Table \ref{fid}. All the fidelities of prepared initial states exceed $99\%$ according to the formula \cite{HHH}:  
\begin{eqnarray}
F(\rho_{\text{exp}},\rho_{\text{ideal}})= \mathrm{Tr}\left( \sqrt{\sqrt{\rho_{\text{ideal}}} \rho_{\text{exp}} \sqrt{\rho_{\text{ideal}}}} \right)^2 
\end{eqnarray}  

All the estimated bounds of the variances are shown in Figure \ref{fig4}. This implies the following supremacy of the present method. Firstly, the variance from the present method is greatly larger than all the other bounds when the entropy $S(\rho)$ is no more than $2.5$. This means that the present method shows great improvement for quantum phase estimation with limited noise. Moreover, while all the previous bounds are evaluated from the standard FI, our method depends on the high-order information of the phase from a new score function. This means that the present score function shows different features of the estimated phase and provides a great advantage beyond linear estimators \cite{Helstrom1969,Barankin1949,Hammersley1950,Chapman1951,Bhattacharyya1946,Abel1993,Gessner2023}. Beyond, our method provides another level of error bound with respect to all the existing hierarchical quantum bounds, with an error bar no more than 0.2657. In experiments, we choose the single photon to verify the QCRB in a one-shot manner. A further experiment should be interesting by implementing joint measurement on multiple copies of single particles. 

The data points deviate from the theoretical values within an acceptable error range mainly because of optical-path phase variations or light-field distortions induced by temperature fluctuations, air turbulence, and mechanical vibrations. Moreover, platform noise—particularly in the white-noise insertion and phase-shift control modules—together with imperfections in the noise source and fluctuations of phase-control elements (e.g., the liquid-crystal variable wave plate, VWP) can degrade the phase-shift accuracy and introduce additional measurement errors.

\section{Conclusion}

In contrast to CRB, the present bound is derived using extendable higher-order derivatives of the estimating function, rather than relying directly on the pdf of the estimated parameters. While the CRB is particularly effective for pdfs that approximate a normal distribution, the inclusion of higher-order derivative information in the present method handles complex distributions well. This approach leverages the attainability condition to provide deeper insights into the structure and properties of estimators.

Although the present high-order information is not additive, the simulations demonstrate that the derived estimation bound asymptotically behaves like the FI and converges to the QCRB for independent and identically distributed data. This asymptotic convergence underscores its practical utility and effectiveness in applications, particularly in situations where higher-order information plays a crucial role. Another important challenge is to incorporate biased estimation scenarios, offering insights into how biases affect the proposed second-order metric. 

In summary, by extending the foundational concepts of FI to incorporate second-order effects, we have proposed a novel framework that enhances both the theoretical and practical understanding of classical and quantum estimation processes. Specifically, we introduced an operational quantity based on the second-order derivative of the estimator, referred to as the second-order information. This new quantity provides a deeper and more comprehensive analysis of a system's behavior when combined with the FI and its quantum extensions, offering valuable insights into estimation processes and their underlying structures.

\begin{acknowledgments}
This work was supported by the National Natural Science Foundation of China (Nos.61772437, 12204386, 12075159, and 12171044), Beijing Natural Science Foundation (No.Z190005), Interdisciplinary Research of Southwest Jiaotong University China Interdisciplinary Research of Southwest Jiaotong University China (No.2682022KJ004), and the Academician Innovation Platform of Hainan Province.
\end{acknowledgments}

\appendix

\section{Classical metrology with high order information}
\label{Classical metrology}
In this section, we show classical metrology with high-order information and give the conditions of asymptotic attainability. 

\subsection{New bound of classical metrology with high order information}

Consider a parametric model $\{p_\theta(x):\theta \in\Omega\}$, where $\theta$ is a single parameter and $\Omega$ is a subset of $\mathbb{R}$. Define $\hat{\theta}$ as an unbiased estimator of $\theta$, satisfying $\mathbbm{E}_X[\hat{\theta}(X)]=\theta$ for $\theta\in \Omega$. Define the second-order information depending on the second derivative of the root score function as 
\begin{eqnarray}
I_2= 4\int\left({\partial^2_{\theta}\sqrt{p_\theta(x)}}\right)^2 dx,
\label{A-2}
\end{eqnarray}
where we have assumed the given statistical function $p_\theta(x)$ has the second-order derivative. 

From the assumption of the unbiased estimation, we obtain 
\begin{eqnarray}
\mathbb{E}_X[\hat{\theta}(X)-\theta]=\int \left(\hat{\theta}(x)-\theta\right)p_\theta(x)dx=0.
\label{A-3}
\end{eqnarray}
By taking the derivative of $\theta$, from Leibniz's Rule, we obtain that 
\begin{eqnarray}
\partial_{\theta} \int(\hat\theta(x) -\theta)p_\theta(x)dx 
&=& \int(\hat\theta(x) -\theta){\partial_{\theta} p_\theta(x)} dx-\int p_\theta(x)dx \nonumber \\
&=& 0.
\label{A-4}
\end{eqnarray}
This implies that 
\begin{eqnarray}
\int(\hat\theta(x)-\theta) {\partial_{\theta} p_\theta(x)} dx=1.
\label{A-5}
\end{eqnarray}

Now, we perform the derivative of $\theta$ on two sides of Eq.(\ref{A-5}) and obtain that 
\begin{eqnarray}
\int\left({\hat\theta(x)-\theta}\right){\partial^2_{\theta} p_\theta(x)} dx-\int {\partial_{\theta} p_\theta(x)} dx=0,
\label{A-6}
\end{eqnarray}
which is equivalent to the following equality as 
\begin{eqnarray}
\int \left({\hat\theta(x)-\theta} \right){\partial_{\theta} ^2 p_\theta(x)} dx=0.
\label{A-7}
\end{eqnarray}
where we have used the equalities $\int p_\theta(x)dx=1$ and $\int {\partial_{\theta} p_\theta(x)} dx=0$.

For any probability function $p_\theta(x)$ with second-order derivative, it follows that 
\begin{eqnarray}
{\partial^2_{\theta} p_\theta(x)} 
&=&{\partial^2_{\theta}}(\sqrt{p_\theta(x)}\sqrt{p_\theta(x)})
\nonumber\\
&=& 2\left({\partial_{\theta} \sqrt{p_\theta(x)}}  \right)^2 + 2\sqrt{p_\theta(x)} {\partial_{\theta} ^2\sqrt{p_\theta(x)}} .
 \label{A-8}
\end{eqnarray}
Substituting Eq.(\ref{A-8}) into Eq.(\ref{A-7}) implies that 
\begin{eqnarray}
&&\int (\hat{\theta}(x)-\theta) 
 \left({\partial_{\theta} \sqrt{p_\theta(x)}} \right)^2 dx \notag\\
  &+&\int ( \hat \theta(x)- \theta)\sqrt{p_\theta(x) } {\partial^2_{\theta}\sqrt{p_\theta(x)}} dx=0,
\label{A-9}
\\
&&\int (\hat{\theta}(x)-\theta) 
 \left({\partial_{\theta} \sqrt{p_\theta(x)}} \right)^2 dx \notag\\
&=& \int (\theta-\hat{\theta}(x))\sqrt{p_\theta(x) } {\partial^2_{\theta}\sqrt{p_\theta(x)}} dx.
\label{A-9}
\end{eqnarray}

Consider the left side of Eq.(\ref{A-9}). Define 
\begin{eqnarray}
\mathbbm{E}_X[ I_{\hat{\theta}}(X)]=\int (\hat{\theta}(x)-\theta)I(x,\theta)dx,
\label{A-10}
\end{eqnarray}
where $I(x,\theta)=({\partial_{\theta} \sqrt{p_\theta(x)}} )^2$, which can be regarded as the density of FI. Applying the Cauchy-Schwarz inequality: $\int fgdx\leq \sqrt{\int f^2 dx\int g^2dx}$ to the second item of the left side of Eq.(\ref{A-9}), we obtain that 
\begin{eqnarray}
&&\left|\int  (\theta-\hat{\theta}(x)) 
 \sqrt{p_\theta(x)} {\partial^2_{\theta}\sqrt{p_\theta(x)}}  dx
 \right| \notag \\
 &\le& \sqrt{\int (\hat\theta(x)-\theta)^2 p_\theta(x)dx} \sqrt{\int \left( {\partial^2_{\theta} \sqrt{p_\theta(x)}}\right)^2dx} 
 \notag \\
 &=& \frac{1}{2}\sqrt{\Delta\hat{\theta}^2}\sqrt{I_2},
 \label{A-11}
\end{eqnarray}
where $\Delta\hat{\theta}^2=\int(\hat\theta(x)-\theta)^2p_\theta(x)dx$ denotes the variance of the estimation $\hat\theta$ and $I_2$ is the second-order information  defined in Eq.(\ref{A-2}). Combining both Eqs.(\ref{A-10},\ref{A-11}) implies that 
\begin{eqnarray}
    \Delta\hat{\theta} \ge \frac{2|\mathbbm{E}_X[I_{\hat{\theta}}(X)]|}{\sqrt{I_2}}.
    \label{A-12}
\end{eqnarray}
This implies that 
\begin{eqnarray}
    \Delta\hat{\theta}^2 \ge \frac{4\mathbbm{E}_X[ I_{\hat{\theta}}(X)]^2}{I_2}.
    \label{A-15}
\end{eqnarray}

Now, when it comes to a point estimation, we prove the more precise bound for $ \mathbbm{E}_X[ I_{\hat{\theta}}(X)]^2 $. In fact, by using the Cauchy-Schwarz inequality to the first item of the left side of Eq.(\ref{A-9}), it follows that
\begin{eqnarray}
&&\int (\hat\theta(x)-\theta)\left( {\partial_{\theta} \sqrt{p_\theta(x)}} \right)^2 dx \notag \\ 
&=& \int \hat\theta(x) \left({\partial_{\theta} \sqrt{p_\theta(x)}} \right)^2 dx-\theta \int \left({\partial_{\theta}\sqrt{p_\theta(x)}} \right)^2 dx \nonumber \\
&\le& \sqrt{\int { \hat \theta(x)^2 dx } \int \left({ \partial_\theta \sqrt{p_\theta(x)}} \right)^4 dx}-\frac{\theta}{4}\mathbb{I} \nonumber \\
&=& c_I \sqrt{\int \hat\theta(x)^2 dx \left(\int \left( {\partial_{\theta}\sqrt{p_\theta(x)}} \right)^2 dx \right)^2}-\frac{\theta}{4}\mathbb{I} \nonumber \\
&=& c_I \frac{I}{4} ( \sqrt{\int \hat{\theta}(x)^2 dx}-\theta ),
\label{A-16}
\end{eqnarray}
where $ c_I $ is a constant determined by the integral interval. Adding up Eqs.(\ref{A-11},\ref{A-16}) implies that
\begin{eqnarray}
0 \le \sqrt {\Delta \hat{\theta}^2} \sqrt{I_2} + c_I \frac{\|\hat{\theta}\|_2-\theta}{2} I,
\label{A-17}
\end{eqnarray}
where $ \|\hat{\theta}\|_2^2=\int \hat{\theta}(x)^2 dx $.

Similarly, it follows that 
\begin{eqnarray}
0 \le \sqrt {\Delta\hat{\theta}^2} \sqrt {I_2} + c_I\frac{\theta-\|\hat{\theta}\|_2}{2}I.
\label{A-18}
\end{eqnarray}
Combining both the inequalities (\ref{A-17}) and (\ref{A-18}) yields
\begin{align}
\Delta\hat{\theta}^2 \ge c_I (\theta-\|\hat{\theta}\|_2)^2 \frac{I^2}{4I_2}.
\end{align}
This has completed the proof.

\subsection{The attainability conditions}
\label{AttainabilityConditions}

In this subsection, we show the conditions of attainability. Specifically, the saturable condition of the inequality (\ref{A-11}) is:
\begin{eqnarray}
\partial_{\theta}^2 \sqrt{p_\theta(x)}=\frac{1}{C}(\theta-\hat{\theta}(x)) \sqrt{p_\theta(x)},
\end{eqnarray}
where $C$ is a constant.

Let $\sqrt{p_\theta(x)}=q_\theta(x)$. It follows that 
\begin{eqnarray}
\partial_{\theta}^2 q_\theta(x)-\frac{(\theta-\hat{\theta}(x))}{C} q_\theta(x)=0.
\label{SI21}
\end{eqnarray}
Assume a solution is given by $q_\theta(x)=e^{r\theta}$. Substituting into the differential equation (\ref{SI21}) gives:
\begin{eqnarray}
r^2 e^{r\theta}-\frac{(\theta-\hat{\theta}(x))}{C} e^{r\theta}=0.
\end{eqnarray}

This implies the characteristic equation as $r=\pm \sqrt{\frac{\theta-\hat{\theta}(x)}{C}}$. This further gives a general solution to the homogeneous equation as
\begin{eqnarray}
q_\theta(x)=C_1 e^{\sqrt{\frac{\theta-\hat{\theta}(x)}{C}} \theta} + C_2 e^{-\sqrt{\frac{\theta-\hat{\theta}(x)}{C}} \theta},
\end{eqnarray}
where $C_1$ and $C_2$ are arbitrary non-zero constants. So, the final condition is given by  
\begin{eqnarray}
p_\theta(x) 
&=& \left(  (C_1+C_2e^{-\theta}) e^{-\sqrt{\frac{\theta-\hat{\theta}(x)}{C}}} \right)^2. 
\end{eqnarray}
This saturable condition for classical metrology with high-order information implies that $p_\theta(x)$ is a non-Gaussian distribution.

\subsection{Application: Direction-of-arrival parameter estimation.}

\begin{figure}[th!]
    \centering
\includegraphics[width=\linewidth]{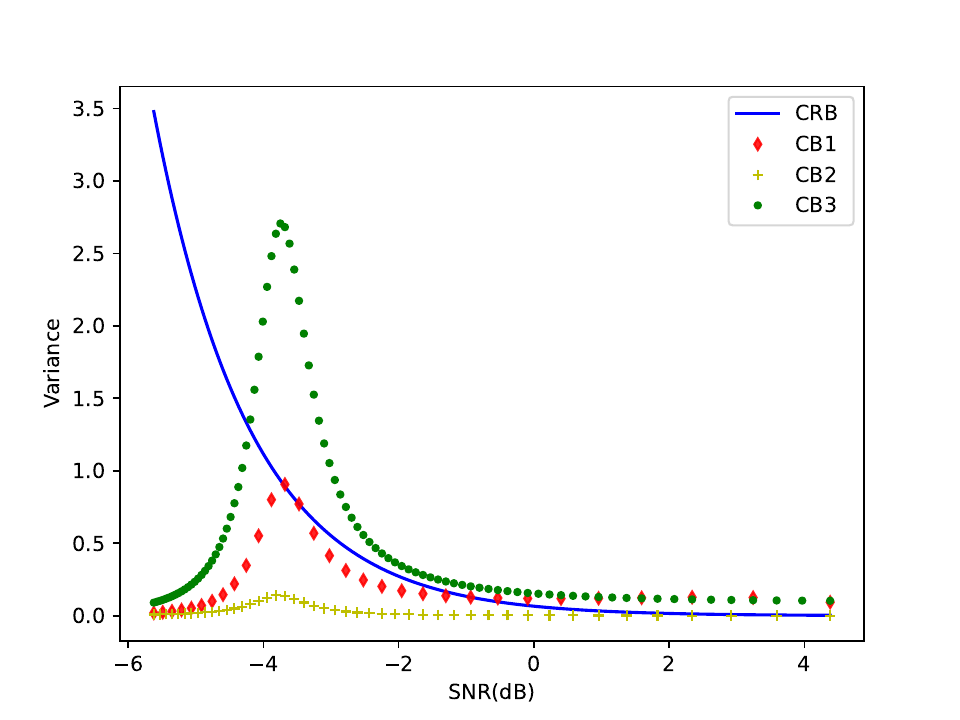}
    \caption{The comparison of the classical lower bounds of the variance of a direction of arrival estimation problem. The blue line denotes the Cramer-Rao bound (CRB) and the others are the proposed classical bounds. the present new classical bound based on higher-order information metric corresponding to the sine estimator (CB1) and cosine estimator (CB2). Point estimation with $(\theta-\|\hat{\theta}\|_2)^2=0.1$ (CB3).} 
    \label{Figure2}
\end{figure}

To show explicitly the distance between the propose classical bound and CRB, we give an example of estimating the arrival direction of signal in radar systems. 

Angle of arrival estimation in modern radar systems relies on antenna arrays for spatial data capture and beamforming implementation. The array's aperture significantly impacts angular resolution, while array density enhances main beam suppression and prevents spatial aliasing, favoring large, densely packed arrays despite associated higher costs. Sensor selection involves choosing a subset of sensors from a full set of candidates, with compressed sensing proving effective in reducing data processing demands while sustaining performance, though the full array remains essential for low-complexity analog processing.

Here, we simplify the array model by focusing on parameter estimation for signals received by a single sensor in a one-dimensional, single-parameter estimation scenario. This approach aims to tackle hardware and computational challenges by potentially enabling a more sparse array design, where only specific sensors are activated, optimizing performance and resource utilization. Signal processing issues such as direction finding with narrow-band sensor arrays and resolving overlapping echoes heavily rely on robust parameter estimation. Accurately estimating parameters of multiple superimposed exponential signals in noise is crucial for addressing these challenges. The Cram\'{e}r-Rao bound has traditionally been a go-to for offline sensor selection, providing a measure of estimation performance regardless of the method used. Additionally, we evaluate the second-order FI to show that the information lower bound introduced in the main text could serve as an effective alternative.

Consider a model to estimate parameter $\theta$ in the transfer item as 
\begin{eqnarray}
    y(t)=A(\theta)x(t) + e(t) \quad t=1, 2, \ldots, N,
\end{eqnarray}
where $y(t)$ is the noisy data vector, $x(t)$ is the vector of signal amplitudes, $e(t)$ is additive noise. $A(\theta)$ is transfer matrix defined as $A(\theta)=[a(\omega_1) \ldots a(\omega_n)]$, where $\omega_i$ is the wave frequency, $a(\omega_i)$ is transfer vector, containing the information of target direction of arrival, and $\theta=[\omega_1 \ldots \omega_n]^T$. 

The likelihood function in this case is given by:
\begin{eqnarray}
&&L(y(1), \ldots, y(N)) \notag\\
&=& \frac{1}{(2\pi)^{mN} (\sigma^2/2)^{mN}} \notag\\
&\times&\exp (-\frac{1}{\sigma^2} \sum_{t=1}^{N} [y(t)-Ax(t)]^*[y(t)-Ax(t)])
\end{eqnarray}
Now, we simplify the problem as $m=1$ and $N=1$ to estimate a scalar $\omega$ to implement one parameter estimation with one dimension signal and a single source. Let $x$ be the signal amplitude. The transfer item is simplified as $a(\omega)=e^{i\omega}$. By taking the derivative of $\omega$ of its log-likelihood function, it follows that 
\begin{eqnarray}
\frac{\partial \ln L}{\partial \omega}=\frac{2}{\sigma^2} \sum_{t=1}^{N} \Re\left\{ x^*(t) \frac{\partial A^*}{\partial \omega} e(t) \right\},
\end{eqnarray}
where $\Re$ denotes the real part. Then the FI is given by 
\begin{eqnarray}
    I=\mathbb{E}\left[\frac{\partial \ln L}{\partial \omega} \right]^2.
\end{eqnarray}
Moreover, the second-order FI we proposed in the main text can be calculated as 
\begin{eqnarray}
    I_2(X,\omega)=\mathbb{E}\left[ \frac{\partial^2 \ln L}{\partial \omega^2}  +  \frac{1}{2}(\frac{\partial \ln L}{\partial \omega} )^2\right]^2.
\end{eqnarray}
where $\ln L =const-m\ln\sigma-\frac{1}{\sigma}[y-a(\omega)x]^*[y-a(\omega)x]$ is the log-likelihood function of the noisy data. Here, $a^*$ denotes the conjugate of $a$. In the simulation, we choose the settings as $\omega=\frac{\pi}{3}$, signal amplitude as $\frac{\pi}{6}$, $\sigma^2 \in [0.1,1]$ and $x \in [-0.2, 0.2]$. Expanding the integral interval will lead to an increase of the bounds. This means the present second-order FI is more efficient for parameter estimation when the samples are close to some fixed values. 

Overall, we simplify the common array model by focusing on parameter estimation for the signal received by a single sensor within the one-dimensional, single-parameter estimation problem. We compare the standard Cram{\'e}r-Rao bound (CRB) and the proposed new classical bound for sine estimator(CB1) and cosine estimator(CB2). For the point estimation scenario, a small deviation of the estimator (i.e., $(\theta-\|\hat{\theta}\|_2)^2=0.1$ in Fig.\ref{Figure2} (CB3)  demonstrates an improved performance beyond the CRB. In general, increasing the signal-to-noise ratio (SNR) will reduce the present bound to the CRB.

\section{Evaluation of high-order quantum information}
\label{Evaluation quantum I2}

The standard definition of QFI is given by \cite{Helstrom1974}
\begin{eqnarray}
F_q= \langle L^2 \rangle =2{\rm Tr} (\rho L^2)
\end{eqnarray}, where $L$ is so-called symmetric logarithmic derivative (SLD). Denoting the parameter under the estimation as $\theta$, the SLD operator is determined by the equation $\partial_\theta \rho=\frac{1}{2}(\rho L+L\rho)$. We extend the SLD by using the following equation:
\begin{eqnarray}
  \partial_\theta \sqrt{\rho}=\frac{1}{2}(\sqrt{\rho} L+L\sqrt{\rho}).
  \label{C1}
\end{eqnarray}
Here, we cannot get the equality of $\langle L\rangle =0$. We employ the Lyapunov representation \cite{Marzolino2004} to compute the generalized SLD operator. The definition in equation (\ref{C1}) is a special form of the Lyapunov equation. To resolve this equation, we define a function as
\begin{eqnarray}
f(s)=e^{-\sqrt{\rho}s}L e^{-\sqrt{\rho}s},
\label{C2}
\end{eqnarray}
where $f$ satisfies $f(0)=L$. The partial derivative of $f(s)$ on $s$ is given by 
\begin{eqnarray}
    \partial_sf(s)=-2e^{-\sqrt{\rho}s}(\partial_\theta \sqrt{\rho}) e^{-\sqrt{\rho}s}.
    \label{C3}
\end{eqnarray}

Integrating both sides of this equation implies that
\begin{eqnarray}
  f(\infty)-f(0)= -2\int_0^\infty e^{-\sqrt{\rho}s}(\partial_\theta \sqrt{\rho}) e^{-\sqrt{\rho}s}ds.
    \label{C4}
\end{eqnarray}
When $\rho$ is full rank, $e^{-\sqrt{\rho}s}$ trends to zero for $s\to \infty$, which implies that $f(\infty)= 0$. This yields to the generalized SLD operator as
\begin{eqnarray}
    L=2\int_0^\infty e^{-\sqrt{\rho}s}(\partial_{\theta} \sqrt{\rho})e^{-\sqrt{\rho}s} ds, 
    \label{C5}
\end{eqnarray}
which provides a basis-independent representation. When $\rho$ is non-full rank, the generalized SLD operator can be decomposed into four blocks \cite{Liu2016}, which can follow the same form as Eq.(\ref{C5}). 

Moreover, from Eq.(\ref{C1}) we obtain that 
\begin{eqnarray}
  \partial^2_\theta \sqrt{\rho}&=&\frac{1}{2}(\partial_\theta\sqrt{\rho} L+\sqrt{\rho}  \partial_\theta L
   +\partial_\theta L \sqrt{\rho}+L \partial_\theta\sqrt{\rho})
   \nonumber\\
   &=&\frac{1}{2}\{\partial_\theta L,\sqrt{\rho}\}+\frac{1}{2}\{\{L,\sqrt{\rho}\},L\},
  \label{C6}
\end{eqnarray}
where $\{\cdot{},\cdot{}\}$ denotes the anti-commutator. By using the spectral decomposition form $\rho=\sum_i\lambda_i|\varphi_i\rangle \langle \varphi_i|$, the element of generalized SLD operator can be expressed by
\begin{eqnarray}
L_{ij}=\frac{\partial_\theta \lambda_i}{\lambda_i}\delta_{ij}+\frac{2(\lambda_i-\lambda_j)}{\lambda_i+\lambda_j}\langle \partial_\theta \varphi_i|\varphi_j\rangle
\label{C7},
\end{eqnarray}
where $\lambda_i$, $|\varphi_i\rangle$ are the $i$-th eigenvalue and eigenstate of $\rho$, respectively. 

Combining with Eq.(\ref{C6}), one can get the element of $\partial^2_\theta \sqrt{\rho}$, which further implies the evaluation of the second-order quantum information as
\begin{eqnarray}
I_2^q
&=&\sum\limits_{i,j} (p_i'' p_j'' a_{ij} 
+ 2p_i''p_j'a_{j|i}^\prime 
+ p_i''p_ja_{j|i}^{\prime \prime} \notag\\
&+& 2p_i' p_j''a_{i|j}^\prime 
+ 4 p_i'  p_j' a_{ij}^\prime 
+ 2p_i'p_j b_{i|j} \notag\\
&+& p_ip_j''a_{i|j}^{\prime\prime}+ 2p_ip_j' b_{j|i} + p_ip_j b_{ij}),
\end{eqnarray}
where $p_i=\sqrt{\lambda_i}$, $p_i'=\partial_\theta p_i$, $p_i''=\partial^2_\theta p_i$, $a_{ij}={\rm Tr}|\varphi_i \rangle \langle \varphi_i|\varphi _j\rangle \langle \varphi_j| $, $a_{ij} ^\prime={\rm Tr}(|\varphi_i \rangle \langle \varphi_i|)^\prime(|\varphi _j\rangle \langle \varphi_j|)^\prime$, $b_{ij}={\rm Tr}(|\varphi_i \rangle \langle \varphi_i|)^{\prime\prime}(|\varphi _j\rangle \langle \varphi_j|)^{\prime\prime}$, $a'_{i|j}={\rm Tr}(|\varphi_i \rangle \langle \varphi_i|)^\prime|\varphi _j\rangle \langle \varphi_j|$, $a_{i|j}^{\prime\prime}={\rm Tr}(|\varphi_i \rangle \langle \varphi_i|)^{\prime\prime}|\varphi _j\rangle \langle \varphi_j|$, and $b_{i|j}={\rm Tr}(|\varphi_i \rangle \langle \varphi_i|)^\prime(|\varphi _j\rangle \langle \varphi_j|)^{\prime\prime}$.

It is worth noticing that, unlike traditional quantum parameter estimation, where the optimal measurements are typically given by the eigenstates of the symmetric logarithmic derivative (SLD), our higher-order information bound does not necessarily admit an optimal measurement with this structure. In general, the optimal measurement operators for our bound need to be obtained via numerical optimization and may not have closed-form expressions. This difference arises from the nonlinear nature of the proposed information measure.

\section{Multi-parameter with  higher-order information}

The new bound is derived based on the core idea of characterizing the system using a higher-order nonlinear second-order information. This higher-order information can be extended to multi-parameter scenarios, which is where the Quantum Cramer-Rao Bound (QCRB) is no longer tight. We elaborate on this extension and provide two calculation examples: the Gaussian distribution and Rényi entropy distribution. These examples demonstrate multi-parameter error estimation bounds based on the higher-order information. However, this discussion can be quite complex. The current work primarily focuses on single-parameter estimation, and the multi-parameter bound will be explored in future research, where we plan to further analyze and develop bounds using higher-order multi-parameter information measures.

We confirm that our new bound can also be extended to multi-parameter estimation. First of all, the new bound is constructed based on higher-order information, which can be extended to a multi-parameter information Matrix as follows.

Let $X$ be a random variable, and $\vec{\theta}=(\theta_{1}, \ldots, \theta_{d})$ a vector of parameters. Assume that by fixing $\vec{\theta}$ we obtain the probability density function  (pdf) $p_{\vec{\theta}}(x)$, which is a function of $x$. The $p_{\vec{\theta}}(x)$ fully determines the chances with which $X$ takes on the events in the outcome space. The second-order FI of the vector $\vec{\theta}\in \mathbb{R}^{d}$ is a positive semi-definite symmetric matrix of dimension $d\times d $ with the entry at the $i$-th row and $j$-th column given by 
\begin{eqnarray}
I_2(\vec{\theta}\,\,)_{i,j} &= & {\rm CI}\Big(\dot{r}(X|\vec{\theta}), \dot{r}^{T}(X|\vec{\theta})\Big )_{i, j}
\nonumber\\
&= &\int\Big(\partial_{\theta_i} r(x |\vec{\theta}), \partial_{\theta_j} r(x|\vec{\theta}) \Big) 
p_{\vec{\theta}}(x)dx,
\end{eqnarray}
where 
$r(x|\vec{\theta})=\sqrt{p_{\vec{\theta}}(x)}$ is the square root-likelihood function related to the Hellinger distance, $\partial_{\theta_{i}} r(x|\vec{\theta})$ is the score function, that is, the partial derivative with respect to the $i$-th component of the vector $\vec{\theta}$ and the dot is short-hand notation for the vector of the partial derivatives with respect to all parameters $\theta_{1}, \ldots, \theta_{d}$. Thus, $\dot{r}(x |\vec{\theta})$ is a $d$-dimensional column vector of score functions, while $\dot{r}^{T}$ is a $d$-dimensional row vector of score functions at the outcome $x$. The partial derivative is evaluated at $\vec{\theta}$, the same $ \vec{\theta}$ that is used in the pdf $p_{\vec{\theta}}(x)$ for the weighting. A similar definition can be easily followed for discrete variable $x$. 

The $i,j$-th entry of the second-order FI matrix can be equivalently calculated via the second-order partial derivatives, that is,
\begin{eqnarray}
\label{FIM}
I_2(\vec{\theta})_{i, j}=\int p_\theta(x) \left[\partial^2_{\theta_i}r(x|\vec{\theta})+\frac{1}{2}(\partial_{\theta_i}r(x|\vec{\theta}))^2\right] 
\notag\\
\times\left[\partial^2_{\theta_j}r(x|\vec{\theta})+\frac{1}{2}(\partial_{\theta_j}r(x|\vec{\theta}))^2\right]dx.
 \label{Matrix Form 2}
\end{eqnarray}
The integral is with respect to the outcomes $x$ of $X$ and $ r(x|\vec{\theta})=\ln p_\theta(x)$.

\textbf{Example S1}. Consider the second-order FI for normally distributed random variables, where $X$ is normally distributed, i.e., $X\sim \mathcal{N}(\mu,\sigma^{2})$. The pdf is given by 
\begin{eqnarray}
p(x|\vec{\theta})=\frac{1}{\sqrt{2\pi\sigma^2}}
  \exp\big(-\frac{1}{2\sigma^{2}} (x-\mu)^{2} 
  \big),
\end{eqnarray}
where the parameters are collected into the vector $\vec{\theta} =(\mu,\sigma^2)$ with $ \mu \in \mathbb{R}$ and $\sigma >0$. The score vector at a specific $\vec{\theta}$ is the following vector of functions of $x$, here $r(x|\vec{\theta})=\ln p_{\vec{\theta}}(x)$
\begin{eqnarray}
\dot{r}(x|\vec{\theta}) &=&\left(\partial_\mu r(x|\vec{\theta}), \partial_{\sigma^2}r(x |\vec{\theta})\right)
 \nonumber \\
&=&\left(\frac{x- \mu}{\sigma^{2}}, -\frac{1}{2\sigma^2}+\frac{(x-\mu)^2}{2(\sigma^2)^2}\right).
\end{eqnarray}

\begin{eqnarray}
\partial^2_{\mu} r(x|\vec{\theta})
&=&-\frac{1}{\sigma^2}\partial^2_{\sigma^2} r(x|\vec{\theta})
 \nonumber \\
&=&\frac{1}{2(\sigma^2)^2}-\frac{(x-\mu)^2}{(\sigma^2)^3}.
\end{eqnarray}

The standard FI matrix $I(\vec{\theta})$ is a $ 2\times 2$ symmetric positive semi-definite matrix given by 
\begin{eqnarray}
\label{FINorm}
I(\vec{\theta}) &=&-\mathbb{E}
\begin{pmatrix}
\partial^{2}_\mu r(x |\mu, \sigma^{2}) & \partial^{2}_{\mu\sigma^2} r(x |\mu, \sigma)
\\
\partial^{2}_{\sigma^2\mu} r(x|\mu, \sigma) & \partial^2_{\sigma^2} r(x|\mu, \sigma)
\end{pmatrix}\nonumber\\
&=&\begin{pmatrix}
\frac{1}{\sigma^2} & 0 \\
0 & \frac{1}{2(\sigma^2)^2}
\end{pmatrix}.
\end{eqnarray}
Here, $\mathbb{E}[x]=\mu, \mathbb{E}(x-\mu)^2=\sigma^2$ and the off-diagonal elements are zero, implying the two parameters $\mu,\sigma^2$ of Gaussian distribution are orthogonal (independent) to each other.

We now evaluate the present second-order FI matrix $I_2(\vec{\theta})$. We show that $I_2(\vec{\theta})$ is a $ 2\times 2$ symmetric positive semi-definite matrix given by 
\begin{eqnarray}
&&I_2(\vec{\theta})=
\begin{pmatrix}
I_2(\mu,\mu) &I_2(\mu,\sigma^2)  \\
 I_2(\sigma^2,\mu)  &I_2(\sigma^2,\sigma^2) 
\end{pmatrix}
\end{eqnarray}
In the calculation of $I_2(\vec\theta)$, we use higher-order central moments of Gaussian distribution: $\mathbb{E}[(X-\mu)^{2k}]=\sigma^{2k}(2k-1)!!$ (here $k$
is a non-negative integer) and
$\mathbb{E}[X-\mu]=\mathbb{E}[(X-\mu)^3]=0$. It is easy to obtain that 
\begin{eqnarray}
I_2(\mu,\mu)
&=&\mathbb{E}[\partial^2_\mu r(x|\vec{\theta})+\frac{1}{2}(\partial_\mu r(x|\vec{\theta}))^2]^2  
\notag\\
&=&\mathbb{E}\left[-\frac{1}{\sigma^2}+\frac{(x-\mu)^2}{2(\sigma^2)^2}\right]^2 \notag\\
&=&\frac{3}{4(\sigma^2)^2}.
\end{eqnarray}
Similarly, we can obtain the second-order FI as 
\begin{eqnarray}
I_2(\sigma^2,\sigma^2) 
&=&\mathbb{E}\left[\partial^2_{\sigma^2} r(x|\vec{\theta})+\frac{1}{2}\left(\partial_{\sigma^2} r(x|\vec{\theta})\right)^2\right]^2 \notag\\
&=&\frac{15}{16(\sigma^2)^4}.
\end{eqnarray}
and 
\begin{align}
I_2(\mu,\sigma^2)
&=
\mathbb{E}\Biggl[
\left(
\partial_\mu^2 r(x|\vec{\theta})
+
\frac{1}{2}
\bigl[\partial_\mu r(x|\vec{\theta})\bigr]^2
\right)
\nonumber\\
&\qquad\times
\left(
\partial_{\sigma^2}^2 r(x|\vec{\theta})
+
\frac{1}{2}
\bigl[\partial_{\sigma^2} r(x|\vec{\theta})\bigr]^2
\right)
\Biggr]
\nonumber\\
&=
\frac{1}{2(\sigma^2)^3}.
\end{align}
This implies the second-order Fisher information (FI) matrix of the Gaussian distribution as
\begin{eqnarray}
I_2(\vec{\theta}\,\,)=\begin{pmatrix}
\frac{3}{4(\sigma^2)^2}& \frac{1}{2(\sigma^2)^3} \\
\frac{1}{2(\sigma^2)^3} & \frac{15}{16(\sigma^2)^4}
 \end{pmatrix},
\end{eqnarray}
which is also symmetric and positive semi-definite. The second-order FI shows a larger fluctuation for a small parameter $\sigma^2$ than the FI. Moreover, as the off-diagonal elements are not zero, the present FI shows a new kind of correlation for two parameters. Unfortunately, we cannot prove that the present estimation matrix of multiple parameters is positive semi-definite for general models.


\begin{thebibliography}{45}%
\makeatletter
\providecommand \@ifxundefined [1]{%
 \@ifx{#1\undefined}
}%
\providecommand \@ifnum [1]{%
 \ifnum #1\expandafter \@firstoftwo
 \else \expandafter \@secondoftwo
 \fi
}%
\providecommand \@ifx [1]{%
 \ifx #1\expandafter \@firstoftwo
 \else \expandafter \@secondoftwo
 \fi
}%
\providecommand \natexlab [1]{#1}%
\providecommand \enquote  [1]{``#1''}%
\providecommand \bibnamefont  [1]{#1}%
\providecommand \bibfnamefont [1]{#1}%
\providecommand \citenamefont [1]{#1}%
\providecommand \href@noop [0]{\@secondoftwo}%
\providecommand \href [0]{\begingroup \@sanitize@url \@href}%
\providecommand \@href[1]{\@@startlink{#1}\@@href}%
\providecommand \@@href[1]{\endgroup#1\@@endlink}%
\providecommand \@sanitize@url [0]{\catcode `\\12\catcode `\$12\catcode `\&12\catcode `\#12\catcode `\^12\catcode `\_12\catcode `\%12\relax}%
\providecommand \@@startlink[1]{}%
\providecommand \@@endlink[0]{}%
\providecommand \url  [0]{\begingroup\@sanitize@url \@url }%
\providecommand \@url [1]{\endgroup\@href {#1}{\urlprefix }}%
\providecommand \urlprefix  [0]{URL }%
\providecommand \Eprint [0]{\href }%
\providecommand \doibase [0]{https://doi.org/}%
\providecommand \selectlanguage [0]{\@gobble}%
\providecommand \bibinfo  [0]{\@secondoftwo}%
\providecommand \bibfield  [0]{\@secondoftwo}%
\providecommand \translation [1]{[#1]}%
\providecommand \BibitemOpen [0]{}%
\providecommand \bibitemStop [0]{}%
\providecommand \bibitemNoStop [0]{.\EOS\space}%
\providecommand \EOS [0]{\spacefactor3000\relax}%
\providecommand \BibitemShut  [1]{\csname bibitem#1\endcsname}%
\let\auto@bib@innerbib\@empty
%</preamble>
\bibitem [{\citenamefont {Cram{\'e}r}(1999)}]{Cramer1999}%
  \BibitemOpen
  \bibfield  {author} {\bibinfo {author} {\bibfnamefont {H.}~\bibnamefont {Cram{\'e}r}},\ }\href@noop {} {\emph {\bibinfo {title} {Mathematical methods of statistics}}},\ Vol.~\bibinfo {volume} {26}\ (\bibinfo  {publisher} {Princeton university press},\ \bibinfo {year} {1999})\BibitemShut {NoStop}%
\bibitem [{\citenamefont {Rao}(1992)}]{Rao1992}%
  \BibitemOpen
  \bibfield  {author} {\bibinfo {author} {\bibfnamefont {C.~R.}\ \bibnamefont {Rao}},\ }in\ \href@noop {} {\emph {\bibinfo {booktitle} {Breakthroughs in Statistics: Foundations and basic theory}}}\ (\bibinfo  {publisher} {Springer},\ \bibinfo {year} {1992})\ pp.\ \bibinfo {pages} {235--247}\BibitemShut {NoStop}%
\bibitem [{\citenamefont {Helstrom}(1969)}]{Helstrom1969}%
  \BibitemOpen
  \bibfield  {author} {\bibinfo {author} {\bibfnamefont {C.}~\bibnamefont {Helstrom}},\ }\href {https://doi.org/10.1007/BF01007479} {\bibfield  {journal} {\bibinfo  {journal} {J. Stat. Phys.}\ }\textbf {\bibinfo {volume} {1}},\ \bibinfo {pages} {231} (\bibinfo {year} {1969})}\BibitemShut {NoStop}%
\bibitem [{\citenamefont {Braunstein}\ and\ \citenamefont {Caves}(1994{\natexlab{a}})}]{Braunstein1994}%
  \BibitemOpen
  \bibfield  {author} {\bibinfo {author} {\bibfnamefont {S.~L.}\ \bibnamefont {Braunstein}}\ and\ \bibinfo {author} {\bibfnamefont {C.~M.}\ \bibnamefont {Caves}},\ }\href {https://doi.org/10.1103/PhysRevLett.72.3439} {\bibfield  {journal} {\bibinfo  {journal} {Phys. Rev. Lett.}\ }\textbf {\bibinfo {volume} {72}},\ \bibinfo {pages} {3439} (\bibinfo {year} {1994}{\natexlab{a}})}\BibitemShut {NoStop}%
\bibitem [{\citenamefont {Liu}\ \emph {et~al.}(2021)\citenamefont {Liu}, \citenamefont {Zhang}, \citenamefont {Li}, \citenamefont {Zhang}, \citenamefont {Yin}, \citenamefont {Fei}, \citenamefont {Li}, \citenamefont {Liu}, \citenamefont {Xu}, \citenamefont {Chen} \emph {et~al.}}]{liu2021}%
  \BibitemOpen
  \bibfield  {author} {\bibinfo {author} {\bibfnamefont {L.-Z.}\ \bibnamefont {Liu}}, \bibinfo {author} {\bibfnamefont {Y.-Z.}\ \bibnamefont {Zhang}}, \bibinfo {author} {\bibfnamefont {Z.-D.}\ \bibnamefont {Li}}, \bibinfo {author} {\bibfnamefont {R.}~\bibnamefont {Zhang}}, \bibinfo {author} {\bibfnamefont {X.-F.}\ \bibnamefont {Yin}}, \bibinfo {author} {\bibfnamefont {Y.-Y.}\ \bibnamefont {Fei}}, \bibinfo {author} {\bibfnamefont {L.}~\bibnamefont {Li}}, \bibinfo {author} {\bibfnamefont {N.-L.}\ \bibnamefont {Liu}}, \bibinfo {author} {\bibfnamefont {F.}~\bibnamefont {Xu}}, \bibinfo {author} {\bibfnamefont {Y.-A.}\ \bibnamefont {Chen}}, \emph {et~al.},\ }\href {https://doi.org/10.1038/s41566-020-00718-2} {\bibfield  {journal} {\bibinfo  {journal} {Nature photonics}\ }\textbf {\bibinfo {volume} {15}},\ \bibinfo {pages} {137} (\bibinfo {year} {2021})}\BibitemShut {NoStop}%
\bibitem [{\citenamefont {Valeri}\ \emph {et~al.}(2020)\citenamefont {Valeri}, \citenamefont {Polino}, \citenamefont {Poderini}, \citenamefont {Gianani}, \citenamefont {Corrielli}, \citenamefont {Crespi}, \citenamefont {Osellame}, \citenamefont {Spagnolo},\ and\ \citenamefont {Sciarrino}}]{valeri2020}%
  \BibitemOpen
  \bibfield  {author} {\bibinfo {author} {\bibfnamefont {M.}~\bibnamefont {Valeri}}, \bibinfo {author} {\bibfnamefont {E.}~\bibnamefont {Polino}}, \bibinfo {author} {\bibfnamefont {D.}~\bibnamefont {Poderini}}, \bibinfo {author} {\bibfnamefont {I.}~\bibnamefont {Gianani}}, \bibinfo {author} {\bibfnamefont {G.}~\bibnamefont {Corrielli}}, \bibinfo {author} {\bibfnamefont {A.}~\bibnamefont {Crespi}}, \bibinfo {author} {\bibfnamefont {R.}~\bibnamefont {Osellame}}, \bibinfo {author} {\bibfnamefont {N.}~\bibnamefont {Spagnolo}},\ and\ \bibinfo {author} {\bibfnamefont {F.}~\bibnamefont {Sciarrino}},\ }\href {https://doi.org/10.1038/s41534-020-00326-6} {\bibfield  {journal} {\bibinfo  {journal} {npj Quantum Information}\ }\textbf {\bibinfo {volume} {6}},\ \bibinfo {pages} {92} (\bibinfo {year} {2020})}\BibitemShut {NoStop}%
\bibitem [{\citenamefont {Yin}\ \emph {et~al.}(2023)\citenamefont {Yin}, \citenamefont {Zhao}, \citenamefont {Yang}, \citenamefont {Guo}, \citenamefont {Zhang}, \citenamefont {Li}, \citenamefont {Han}, \citenamefont {Liu}, \citenamefont {Xu}, \citenamefont {Chiribella} \emph {et~al.}}]{yin2023}%
  \BibitemOpen
  \bibfield  {author} {\bibinfo {author} {\bibfnamefont {P.}~\bibnamefont {Yin}}, \bibinfo {author} {\bibfnamefont {X.}~\bibnamefont {Zhao}}, \bibinfo {author} {\bibfnamefont {Y.}~\bibnamefont {Yang}}, \bibinfo {author} {\bibfnamefont {Y.}~\bibnamefont {Guo}}, \bibinfo {author} {\bibfnamefont {W.-H.}\ \bibnamefont {Zhang}}, \bibinfo {author} {\bibfnamefont {G.-C.}\ \bibnamefont {Li}}, \bibinfo {author} {\bibfnamefont {Y.-J.}\ \bibnamefont {Han}}, \bibinfo {author} {\bibfnamefont {B.-H.}\ \bibnamefont {Liu}}, \bibinfo {author} {\bibfnamefont {J.-S.}\ \bibnamefont {Xu}}, \bibinfo {author} {\bibfnamefont {G.}~\bibnamefont {Chiribella}}, \emph {et~al.},\ }\href {https://doi.org/10.1038/s41567-023-02046-y} {\bibfield  {journal} {\bibinfo  {journal} {Nat. Phys.}\ }\textbf {\bibinfo {volume} {19}},\ \bibinfo {pages} {1122} (\bibinfo {year} {2023})}\BibitemShut {NoStop}%
\bibitem [{\citenamefont {Giovannetti}\ \emph {et~al.}(2006)\citenamefont {Giovannetti}, \citenamefont {Lloyd},\ and\ \citenamefont {Maccone}}]{Giovannetti2006}%
  \BibitemOpen
  \bibfield  {author} {\bibinfo {author} {\bibfnamefont {V.}~\bibnamefont {Giovannetti}}, \bibinfo {author} {\bibfnamefont {S.}~\bibnamefont {Lloyd}},\ and\ \bibinfo {author} {\bibfnamefont {L.}~\bibnamefont {Maccone}},\ }\href {https://doi.org/10.1103/PhysRevLett.96.010401} {\bibfield  {journal} {\bibinfo  {journal} {Phys. Rev. Lett.}\ }\textbf {\bibinfo {volume} {96}},\ \bibinfo {pages} {010401} (\bibinfo {year} {2006})}\BibitemShut {NoStop}%
\bibitem [{\citenamefont {Correa}\ \emph {et~al.}(2015)\citenamefont {Correa}, \citenamefont {Mehboudi}, \citenamefont {Adesso},\ and\ \citenamefont {Sanpera}}]{Correa2015}%
  \BibitemOpen
  \bibfield  {author} {\bibinfo {author} {\bibfnamefont {L.~A.}\ \bibnamefont {Correa}}, \bibinfo {author} {\bibfnamefont {M.}~\bibnamefont {Mehboudi}}, \bibinfo {author} {\bibfnamefont {G.}~\bibnamefont {Adesso}},\ and\ \bibinfo {author} {\bibfnamefont {A.}~\bibnamefont {Sanpera}},\ }\href {https://doi.org/10.1103/PhysRevLett.114.220405} {\bibfield  {journal} {\bibinfo  {journal} {Phys. Rev. Lett.}\ }\textbf {\bibinfo {volume} {114}},\ \bibinfo {pages} {220405} (\bibinfo {year} {2015})}\BibitemShut {NoStop}%
\bibitem [{\citenamefont {Pasquale}\ \emph {et~al.}(2016)\citenamefont {Pasquale}, \citenamefont {Rossini}, \citenamefont {Fazio},\ and\ \citenamefont {Giovannetti}}]{Antonella2016}%
  \BibitemOpen
  \bibfield  {author} {\bibinfo {author} {\bibfnamefont {A.~D.}\ \bibnamefont {Pasquale}}, \bibinfo {author} {\bibfnamefont {D.}~\bibnamefont {Rossini}}, \bibinfo {author} {\bibfnamefont {R.}~\bibnamefont {Fazio}},\ and\ \bibinfo {author} {\bibfnamefont {V.}~\bibnamefont {Giovannetti}},\ }\href {https://doi.org/10.1038/ncomms12782} {\bibfield  {journal} {\bibinfo  {journal} {Nat. Commun.}\ }\textbf {\bibinfo {volume} {7}},\ \bibinfo {pages} {12782} (\bibinfo {year} {2016})}\BibitemShut {NoStop}%
\bibitem [{\citenamefont {Lovchinsky}\ \emph {et~al.}(2016)\citenamefont {Lovchinsky}, \citenamefont {Sushkov}, \citenamefont {Urbach}, \citenamefont {de~Leon}, \citenamefont {Choi}, \citenamefont {Greve}, \citenamefont {Evans}, \citenamefont {Gertner}, \citenamefont {Bersin}, \citenamefont {M{\"{u}}ller}, \citenamefont {McGuinness}, \citenamefont {Jelezko}, \citenamefont {Walsworth}, \citenamefont {Park},\ and\ \citenamefont {Lukin}}]{Lovchinsky2016}%
  \BibitemOpen
  \bibfield  {author} {\bibinfo {author} {\bibfnamefont {I.}~\bibnamefont {Lovchinsky}}, \bibinfo {author} {\bibfnamefont {A.~O.}\ \bibnamefont {Sushkov}}, \bibinfo {author} {\bibfnamefont {E.}~\bibnamefont {Urbach}}, \bibinfo {author} {\bibfnamefont {N.~P.}\ \bibnamefont {de~Leon}}, \bibinfo {author} {\bibfnamefont {S.}~\bibnamefont {Choi}}, \bibinfo {author} {\bibfnamefont {K.~D.}\ \bibnamefont {Greve}}, \bibinfo {author} {\bibfnamefont {R.}~\bibnamefont {Evans}}, \bibinfo {author} {\bibfnamefont {R.}~\bibnamefont {Gertner}}, \bibinfo {author} {\bibfnamefont {E.}~\bibnamefont {Bersin}}, \bibinfo {author} {\bibfnamefont {C.}~\bibnamefont {M{\"{u}}ller}}, \bibinfo {author} {\bibfnamefont {L.}~\bibnamefont {McGuinness}}, \bibinfo {author} {\bibfnamefont {F.}~\bibnamefont {Jelezko}}, \bibinfo {author} {\bibfnamefont {R.~L.}\ \bibnamefont {Walsworth}}, \bibinfo {author} {\bibfnamefont {H.}~\bibnamefont {Park}},\ and\ \bibinfo {author} {\bibfnamefont {M.~D.}\ \bibnamefont {Lukin}},\ }\href
  {https://doi.org/10.1126/science.aad8022} {\bibfield  {journal} {\bibinfo  {journal} {Science}\ }\textbf {\bibinfo {volume} {351}},\ \bibinfo {pages} {836} (\bibinfo {year} {2016})}\BibitemShut {NoStop}%
\bibitem [{\citenamefont {Abbott}\ \emph {et~al.}(2016)\citenamefont {Abbott} \emph {et~al.}}]{Abbott2016}%
  \BibitemOpen
  \bibfield  {author} {\bibinfo {author} {\bibfnamefont {B.}~\bibnamefont {Abbott}} \emph {et~al.},\ }\href {https://doi.org/10.1103/PhysRevLett.116.061102} {\bibfield  {journal} {\bibinfo  {journal} {Phys. Rev. Lett.}\ }\textbf {\bibinfo {volume} {116}},\ \bibinfo {pages} {061102} (\bibinfo {year} {2016})}\BibitemShut {NoStop}%
\bibitem [{\citenamefont {Degen}\ \emph {et~al.}(2017)\citenamefont {Degen}, \citenamefont {Reinhard},\ and\ \citenamefont {Cappellaro}}]{Degen2017}%
  \BibitemOpen
  \bibfield  {author} {\bibinfo {author} {\bibfnamefont {C.~L.}\ \bibnamefont {Degen}}, \bibinfo {author} {\bibfnamefont {F.}~\bibnamefont {Reinhard}},\ and\ \bibinfo {author} {\bibfnamefont {P.}~\bibnamefont {Cappellaro}},\ }\href {https://doi.org/10.1103/RevModPhys.89.035002} {\bibfield  {journal} {\bibinfo  {journal} {Rev. Mod. Phys.}\ }\textbf {\bibinfo {volume} {89}},\ \bibinfo {pages} {035002} (\bibinfo {year} {2017})}\BibitemShut {NoStop}%
\bibitem [{\citenamefont {Abbas}\ \emph {et~al.}(2021)\citenamefont {Abbas}, \citenamefont {Sutter}, \citenamefont {Zoufal}, \citenamefont {Lucchi}, \citenamefont {Figalli},\ and\ \citenamefont {Woerner}}]{Abbas2021}%
  \BibitemOpen
  \bibfield  {author} {\bibinfo {author} {\bibfnamefont {A.}~\bibnamefont {Abbas}}, \bibinfo {author} {\bibfnamefont {D.}~\bibnamefont {Sutter}}, \bibinfo {author} {\bibfnamefont {C.}~\bibnamefont {Zoufal}}, \bibinfo {author} {\bibfnamefont {A.}~\bibnamefont {Lucchi}}, \bibinfo {author} {\bibfnamefont {A.}~\bibnamefont {Figalli}},\ and\ \bibinfo {author} {\bibfnamefont {S.}~\bibnamefont {Woerner}},\ }\href {https://doi.org/10.1038/s43588-021-00084-1} {\bibfield  {journal} {\bibinfo  {journal} {Nature Computational Science}\ }\textbf {\bibinfo {volume} {1}},\ \bibinfo {pages} {403} (\bibinfo {year} {2021})}\BibitemShut {NoStop}%
\bibitem [{\citenamefont {Proctor}\ \emph {et~al.}(2018)\citenamefont {Proctor}, \citenamefont {Knott},\ and\ \citenamefont {Dunningham}}]{Proctor2018}%
  \BibitemOpen
  \bibfield  {author} {\bibinfo {author} {\bibfnamefont {T.~J.}\ \bibnamefont {Proctor}}, \bibinfo {author} {\bibfnamefont {P.~A.}\ \bibnamefont {Knott}},\ and\ \bibinfo {author} {\bibfnamefont {J.~A.}\ \bibnamefont {Dunningham}},\ }\href {https://doi.org/10.1103/PhysRevLett.120.080501} {\bibfield  {journal} {\bibinfo  {journal} {Phys. Rev. Lett.}\ }\textbf {\bibinfo {volume} {120}},\ \bibinfo {pages} {080501} (\bibinfo {year} {2018})}\BibitemShut {NoStop}%
\bibitem [{\citenamefont {Pezz\`e}\ \emph {et~al.}(2018)\citenamefont {Pezz\`e}, \citenamefont {Smerzi}, \citenamefont {Oberthaler}, \citenamefont {Schmied},\ and\ \citenamefont {Treutlein}}]{Pezze2018}%
  \BibitemOpen
  \bibfield  {author} {\bibinfo {author} {\bibfnamefont {L.}~\bibnamefont {Pezz\`e}}, \bibinfo {author} {\bibfnamefont {A.}~\bibnamefont {Smerzi}}, \bibinfo {author} {\bibfnamefont {M.~K.}\ \bibnamefont {Oberthaler}}, \bibinfo {author} {\bibfnamefont {R.}~\bibnamefont {Schmied}},\ and\ \bibinfo {author} {\bibfnamefont {P.}~\bibnamefont {Treutlein}},\ }\href {https://doi.org/10.1103/RevModPhys.90.035005} {\bibfield  {journal} {\bibinfo  {journal} {Rev. Mod. Phys.}\ }\textbf {\bibinfo {volume} {90}},\ \bibinfo {pages} {035005} (\bibinfo {year} {2018})}\BibitemShut {NoStop}%
\bibitem [{\citenamefont {Casola}\ \emph {et~al.}(2018)\citenamefont {Casola}, \citenamefont {van~der Sar},\ and\ \citenamefont {Yacoby}}]{Casola2018}%
  \BibitemOpen
  \bibfield  {author} {\bibinfo {author} {\bibfnamefont {F.}~\bibnamefont {Casola}}, \bibinfo {author} {\bibfnamefont {T.}~\bibnamefont {van~der Sar}},\ and\ \bibinfo {author} {\bibfnamefont {A.}~\bibnamefont {Yacoby}},\ }\href {https://doi.org/10.1038/natrevmats.2017.88} {\bibfield  {journal} {\bibinfo  {journal} {Nat. Rev. Mater.}\ }\textbf {\bibinfo {volume} {3}},\ \bibinfo {pages} {17088} (\bibinfo {year} {2018})}\BibitemShut {NoStop}%
\bibitem [{\citenamefont {Couteau}\ \emph {et~al.}(2023)\citenamefont {Couteau}, \citenamefont {Barz}, \citenamefont {Durt}, \citenamefont {Gerrits}, \citenamefont {Huwer}, \citenamefont {Prevedel}, \citenamefont {Rarity}, \citenamefont {Shields},\ and\ \citenamefont {Weihs}}]{couteau2023}%
  \BibitemOpen
  \bibfield  {author} {\bibinfo {author} {\bibfnamefont {C.}~\bibnamefont {Couteau}}, \bibinfo {author} {\bibfnamefont {S.}~\bibnamefont {Barz}}, \bibinfo {author} {\bibfnamefont {T.}~\bibnamefont {Durt}}, \bibinfo {author} {\bibfnamefont {T.}~\bibnamefont {Gerrits}}, \bibinfo {author} {\bibfnamefont {J.}~\bibnamefont {Huwer}}, \bibinfo {author} {\bibfnamefont {R.}~\bibnamefont {Prevedel}}, \bibinfo {author} {\bibfnamefont {J.}~\bibnamefont {Rarity}}, \bibinfo {author} {\bibfnamefont {A.}~\bibnamefont {Shields}},\ and\ \bibinfo {author} {\bibfnamefont {G.}~\bibnamefont {Weihs}},\ }\href {https://doi.org/10.1038/s42254-023-00589-w} {\bibfield  {journal} {\bibinfo  {journal} {Nat. Rev. Phys.}\ }\textbf {\bibinfo {volume} {5}},\ \bibinfo {pages} {354} (\bibinfo {year} {2023})}\BibitemShut {NoStop}%
\bibitem [{\citenamefont {McCuller}\ \emph {et~al.}(2020)\citenamefont {McCuller}, \citenamefont {Whittle}, \citenamefont {Ganapathy}, \citenamefont {Komori}, \citenamefont {Tse}, \citenamefont {Fernandez-Galiana}, \citenamefont {Barsotti}, \citenamefont {Fritschel}, \citenamefont {MacInnis}, \citenamefont {Matichard}, \citenamefont {Mason}, \citenamefont {Mavalvala}, \citenamefont {Mittleman}, \citenamefont {Yu}, \citenamefont {Zucker},\ and\ \citenamefont {Evans}}]{McCuller2020}%
  \BibitemOpen
  \bibfield  {author} {\bibinfo {author} {\bibfnamefont {L.}~\bibnamefont {McCuller}}, \bibinfo {author} {\bibfnamefont {C.}~\bibnamefont {Whittle}}, \bibinfo {author} {\bibfnamefont {D.}~\bibnamefont {Ganapathy}}, \bibinfo {author} {\bibfnamefont {K.}~\bibnamefont {Komori}}, \bibinfo {author} {\bibfnamefont {M.}~\bibnamefont {Tse}}, \bibinfo {author} {\bibfnamefont {A.}~\bibnamefont {Fernandez-Galiana}}, \bibinfo {author} {\bibfnamefont {L.}~\bibnamefont {Barsotti}}, \bibinfo {author} {\bibfnamefont {P.}~\bibnamefont {Fritschel}}, \bibinfo {author} {\bibfnamefont {M.}~\bibnamefont {MacInnis}}, \bibinfo {author} {\bibfnamefont {F.}~\bibnamefont {Matichard}}, \bibinfo {author} {\bibfnamefont {K.}~\bibnamefont {Mason}}, \bibinfo {author} {\bibfnamefont {N.}~\bibnamefont {Mavalvala}}, \bibinfo {author} {\bibfnamefont {R.}~\bibnamefont {Mittleman}}, \bibinfo {author} {\bibfnamefont {H.}~\bibnamefont {Yu}}, \bibinfo {author} {\bibfnamefont {M.}~\bibnamefont {Zucker}},\ and\ \bibinfo {author} {\bibfnamefont
  {M.}~\bibnamefont {Evans}},\ }\href {https://doi.org/10.1103/PhysRevLett.124.171102} {\bibfield  {journal} {\bibinfo  {journal} {Phys. Rev. Lett.}\ }\textbf {\bibinfo {volume} {124}},\ \bibinfo {pages} {171102} (\bibinfo {year} {2020})}\BibitemShut {NoStop}%
\bibitem [{\citenamefont {Polino}\ \emph {et~al.}(2020)\citenamefont {Polino}, \citenamefont {Valeri}, \citenamefont {Spagnolo},\ and\ \citenamefont {Sciarrino}}]{polino2020}%
  \BibitemOpen
  \bibfield  {author} {\bibinfo {author} {\bibfnamefont {E.}~\bibnamefont {Polino}}, \bibinfo {author} {\bibfnamefont {M.}~\bibnamefont {Valeri}}, \bibinfo {author} {\bibfnamefont {N.}~\bibnamefont {Spagnolo}},\ and\ \bibinfo {author} {\bibfnamefont {F.}~\bibnamefont {Sciarrino}},\ }\href {https://doi.org/10.1116/5.0007577} {\bibfield  {journal} {\bibinfo  {journal} {AVS Quantum Science}\ }\textbf {\bibinfo {volume} {2}},\ \bibinfo {pages} {024703} (\bibinfo {year} {2020})}\BibitemShut {NoStop}%
\bibitem [{\citenamefont {Marvian}(2022)}]{Marvian2022}%
  \BibitemOpen
  \bibfield  {author} {\bibinfo {author} {\bibfnamefont {I.}~\bibnamefont {Marvian}},\ }\href {https://doi.org/10.1103/PhysRevLett.129.190502} {\bibfield  {journal} {\bibinfo  {journal} {Phys. Rev. Lett.}\ }\textbf {\bibinfo {volume} {129}},\ \bibinfo {pages} {190502} (\bibinfo {year} {2022})}\BibitemShut {NoStop}%
\bibitem [{\citenamefont {Gessner}\ and\ \citenamefont {Smerzi}(2023)}]{Gessner2023}%
  \BibitemOpen
  \bibfield  {author} {\bibinfo {author} {\bibfnamefont {M.}~\bibnamefont {Gessner}}\ and\ \bibinfo {author} {\bibfnamefont {A.}~\bibnamefont {Smerzi}},\ }\href {https://doi.org/10.1103/PhysRevLett.130.260801} {\bibfield  {journal} {\bibinfo  {journal} {Phys. Rev. Lett.}\ }\textbf {\bibinfo {volume} {130}},\ \bibinfo {pages} {260801} (\bibinfo {year} {2023})}\BibitemShut {NoStop}%
\bibitem [{\citenamefont {Hammersley}(1950)}]{Hammersley1950}%
  \BibitemOpen
  \bibfield  {author} {\bibinfo {author} {\bibfnamefont {J.~M.}\ \bibnamefont {Hammersley}},\ }\href@noop {} {\bibfield  {journal} {\bibinfo  {journal} {J. Roy. Statist. Soc. Ser. B}\ }\textbf {\bibinfo {volume} {12}},\ \bibinfo {pages} {192} (\bibinfo {year} {1950})}\BibitemShut {NoStop}%
\bibitem [{\citenamefont {Chapman}\ and\ \citenamefont {Robbins}(1951)}]{Chapman1951}%
  \BibitemOpen
  \bibfield  {author} {\bibinfo {author} {\bibfnamefont {D.~G.}\ \bibnamefont {Chapman}}\ and\ \bibinfo {author} {\bibfnamefont {H.}~\bibnamefont {Robbins}},\ }\href {https://doi.org/10.1214/aoms/1177729548} {\bibfield  {journal} {\bibinfo  {journal} {Ann. Math. Statist.}\ }\textbf {\bibinfo {volume} {22}},\ \bibinfo {pages} {581} (\bibinfo {year} {1951})}\BibitemShut {NoStop}%
\bibitem [{\citenamefont {Abel}(1993)}]{Abel1993}%
  \BibitemOpen
  \bibfield  {author} {\bibinfo {author} {\bibfnamefont {J.~S.}\ \bibnamefont {Abel}},\ }\href {https://doi.org/10.1109/18.259655} {\bibfield  {journal} {\bibinfo  {journal} {IEEE Trans. Inf. Theory}\ }\textbf {\bibinfo {volume} {39}},\ \bibinfo {pages} {1675} (\bibinfo {year} {1993})}\BibitemShut {NoStop}%
\bibitem [{\citenamefont {Bhattacharyya}(1946)}]{Bhattacharyya1946}%
  \BibitemOpen
  \bibfield  {author} {\bibinfo {author} {\bibfnamefont {A.}~\bibnamefont {Bhattacharyya}},\ }\href@noop {} {\bibfield  {journal} {\bibinfo  {journal} {Sankhy\={a}}\ }\textbf {\bibinfo {volume} {8}},\ \bibinfo {pages} {1} (\bibinfo {year} {1946})}\BibitemShut {NoStop}%
\bibitem [{\citenamefont {Barankin}(1949)}]{Barankin1949}%
  \BibitemOpen
  \bibfield  {author} {\bibinfo {author} {\bibfnamefont {E.~W.}\ \bibnamefont {Barankin}},\ }\href {https://doi.org/10.1214/aoms/1177729943} {\bibfield  {journal} {\bibinfo  {journal} {Ann. Math. Stat.}\ }\textbf {\bibinfo {volume} {20}},\ \bibinfo {pages} {477} (\bibinfo {year} {1949})}\BibitemShut {NoStop}%
\bibitem [{\citenamefont {Stoica}\ and\ \citenamefont {Ng}(1998)}]{stoica1998}%
  \BibitemOpen
  \bibfield  {author} {\bibinfo {author} {\bibfnamefont {P.}~\bibnamefont {Stoica}}\ and\ \bibinfo {author} {\bibfnamefont {B.~C.}\ \bibnamefont {Ng}},\ }\href {https://doi.org/10.1109/97.700919} {\bibfield  {journal} {\bibinfo  {journal} {IEEE Signal Proc. Lett.}\ }\textbf {\bibinfo {volume} {5}},\ \bibinfo {pages} {177} (\bibinfo {year} {1998})}\BibitemShut {NoStop}%
\bibitem [{\citenamefont {Ben-Haim}\ and\ \citenamefont {Eldar}(2010)}]{Ben-Haim2010}%
  \BibitemOpen
  \bibfield  {author} {\bibinfo {author} {\bibfnamefont {Z.}~\bibnamefont {Ben-Haim}}\ and\ \bibinfo {author} {\bibfnamefont {Y.~C.}\ \bibnamefont {Eldar}},\ }\href@noop {} {\bibfield  {journal} {\bibinfo  {journal} {IEEE Trans. Signal Proc.}\ }\textbf {\bibinfo {volume} {58}},\ \bibinfo {pages} {3384} (\bibinfo {year} {2010})}\BibitemShut {NoStop}%
\bibitem [{\citenamefont {Cianchi}\ \emph {et~al.}(2013)\citenamefont {Cianchi}, \citenamefont {Lutwak}, \citenamefont {Yang},\ and\ \citenamefont {Zhang}}]{cianchi2013}%
  \BibitemOpen
  \bibfield  {author} {\bibinfo {author} {\bibfnamefont {A.}~\bibnamefont {Cianchi}}, \bibinfo {author} {\bibfnamefont {E.}~\bibnamefont {Lutwak}}, \bibinfo {author} {\bibfnamefont {D.}~\bibnamefont {Yang}},\ and\ \bibinfo {author} {\bibfnamefont {G.}~\bibnamefont {Zhang}},\ }\href@noop {} {\bibfield  {journal} {\bibinfo  {journal} {IEEE Trans. Inf. Theory}\ }\textbf {\bibinfo {volume} {60}},\ \bibinfo {pages} {643} (\bibinfo {year} {2013})}\BibitemShut {NoStop}%
\bibitem [{\citenamefont {Fisher}(1925)}]{Fisher1925}%
  \BibitemOpen
  \bibfield  {author} {\bibinfo {author} {\bibfnamefont {R.~A.}\ \bibnamefont {Fisher}},\ }\href@noop {} {\bibfield  {journal} {\bibinfo  {journal} {Mathematical Proceedings of the Cambridge Philosophical Society}\ }\textbf {\bibinfo {volume} {22}},\ \bibinfo {pages} {700} (\bibinfo {year} {1925})}\BibitemShut {NoStop}%
\bibitem [{\citenamefont {Holevo}(2011)}]{QCRB2011Holevo}%
  \BibitemOpen
  \bibfield  {author} {\bibinfo {author} {\bibfnamefont {A.~S.}\ \bibnamefont {Holevo}},\ }\href@noop {} {\emph {\bibinfo {title} {Probabilistic and statistical aspects of quantum theory}}},\ Vol.~\bibinfo {volume} {1}\ (\bibinfo  {publisher} {Springer Science \& Business Media},\ \bibinfo {year} {2011})\BibitemShut {NoStop}%
\bibitem [{\citenamefont {Giovannetti}\ \emph {et~al.}(2012)\citenamefont {Giovannetti}, \citenamefont {Lloyd},\ and\ \citenamefont {Maccone}}]{VGbound2012}%
  \BibitemOpen
  \bibfield  {author} {\bibinfo {author} {\bibfnamefont {V.}~\bibnamefont {Giovannetti}}, \bibinfo {author} {\bibfnamefont {S.}~\bibnamefont {Lloyd}},\ and\ \bibinfo {author} {\bibfnamefont {L.}~\bibnamefont {Maccone}},\ }\href {https://doi.org/10.1103/PhysRevLett.108.260405} {\bibfield  {journal} {\bibinfo  {journal} {Phys. Rev. Lett.}\ }\textbf {\bibinfo {volume} {108}},\ \bibinfo {pages} {260405} (\bibinfo {year} {2012})}\BibitemShut {NoStop}%
\bibitem [{\citenamefont {Chu}\ and\ \citenamefont {Cai}(2022)}]{Chu2022}%
  \BibitemOpen
  \bibfield  {author} {\bibinfo {author} {\bibfnamefont {Y.}~\bibnamefont {Chu}}\ and\ \bibinfo {author} {\bibfnamefont {J.}~\bibnamefont {Cai}},\ }\href {https://doi.org/10.1103/PhysRevLett.128.200501} {\bibfield  {journal} {\bibinfo  {journal} {Phys. Rev. Lett.}\ }\textbf {\bibinfo {volume} {128}},\ \bibinfo {pages} {200501} (\bibinfo {year} {2022})}\BibitemShut {NoStop}%
\bibitem [{\citenamefont {Miller}\ and\ \citenamefont {Anders}(2018)}]{miller2018}%
  \BibitemOpen
  \bibfield  {author} {\bibinfo {author} {\bibfnamefont {H.~J.}\ \bibnamefont {Miller}}\ and\ \bibinfo {author} {\bibfnamefont {J.}~\bibnamefont {Anders}},\ }\href {https://doi.org/10.1038/s41467-018-04536-7} {\bibfield  {journal} {\bibinfo  {journal} {Nat. Commun.}\ }\textbf {\bibinfo {volume} {9}},\ \bibinfo {pages} {2203} (\bibinfo {year} {2018})}\BibitemShut {NoStop}%
\bibitem [{\citenamefont {Masanes}\ and\ \citenamefont {Oppenheim}(2017)}]{Masanes2017}%
  \BibitemOpen
  \bibfield  {author} {\bibinfo {author} {\bibfnamefont {L.}~\bibnamefont {Masanes}}\ and\ \bibinfo {author} {\bibfnamefont {J.}~\bibnamefont {Oppenheim}},\ }\href {https://doi.org/10.1038/ncomms14538} {\bibfield  {journal} {\bibinfo  {journal} {Nat. Commun.}\ }\textbf {\bibinfo {volume} {8}},\ \bibinfo {pages} {1} (\bibinfo {year} {2017})}\BibitemShut {NoStop}%
\bibitem [{\citenamefont {Liu}\ \emph {et~al.}(2020)\citenamefont {Liu}, \citenamefont {Yuan}, \citenamefont {Lu},\ and\ \citenamefont {Wang}}]{Liu2020}%
  \BibitemOpen
  \bibfield  {author} {\bibinfo {author} {\bibfnamefont {J.}~\bibnamefont {Liu}}, \bibinfo {author} {\bibfnamefont {H.}~\bibnamefont {Yuan}}, \bibinfo {author} {\bibfnamefont {X.-M.}\ \bibnamefont {Lu}},\ and\ \bibinfo {author} {\bibfnamefont {X.}~\bibnamefont {Wang}},\ }\href {https://doi.org/10.1088/1751-8121/ab5d4d} {\bibfield  {journal} {\bibinfo  {journal} {J. Phys. A: Math. Theore.}\ }\textbf {\bibinfo {volume} {53}},\ \bibinfo {pages} {023001} (\bibinfo {year} {2020})}\BibitemShut {NoStop}%
\bibitem [{\citenamefont {Braun}\ \emph {et~al.}(2018)\citenamefont {Braun}, \citenamefont {Adesso}, \citenamefont {Benatti}, \citenamefont {Floreanini}, \citenamefont {Marzolino}, \citenamefont {Mitchell},\ and\ \citenamefont {Pirandola}}]{QFIMReview2020}%
  \BibitemOpen
  \bibfield  {author} {\bibinfo {author} {\bibfnamefont {D.}~\bibnamefont {Braun}}, \bibinfo {author} {\bibfnamefont {G.}~\bibnamefont {Adesso}}, \bibinfo {author} {\bibfnamefont {F.}~\bibnamefont {Benatti}}, \bibinfo {author} {\bibfnamefont {R.}~\bibnamefont {Floreanini}}, \bibinfo {author} {\bibfnamefont {U.}~\bibnamefont {Marzolino}}, \bibinfo {author} {\bibfnamefont {M.~W.}\ \bibnamefont {Mitchell}},\ and\ \bibinfo {author} {\bibfnamefont {S.}~\bibnamefont {Pirandola}},\ }\href {https://doi.org/10.1103/RevModPhys.90.035006} {\bibfield  {journal} {\bibinfo  {journal} {Rev. Mod. Phys.}\ }\textbf {\bibinfo {volume} {90}},\ \bibinfo {pages} {035006} (\bibinfo {year} {2018})}\BibitemShut {NoStop}%
\bibitem [{\citenamefont {Helstrom}(1976)}]{HelstromBOOK}%
  \BibitemOpen
  \bibfield  {author} {\bibinfo {author} {\bibfnamefont {C.~M.}\ \bibnamefont {Helstrom}},\ }\href@noop {} {\emph {\bibinfo {title} {Quantum Detection and Estimation Theory}}}\ (\bibinfo  {publisher} {Academic Press},\ \bibinfo {year} {1976})\BibitemShut {NoStop}%
\bibitem [{\citenamefont {Braunstein}\ and\ \citenamefont {Caves}(1994{\natexlab{b}})}]{BraunsteinPRL1994}%
  \BibitemOpen
  \bibfield  {author} {\bibinfo {author} {\bibfnamefont {S.~L.}\ \bibnamefont {Braunstein}}\ and\ \bibinfo {author} {\bibfnamefont {C.~M.}\ \bibnamefont {Caves}},\ }\href {https://doi.org/10.1103/PhysRevLett.72.3439} {\bibfield  {journal} {\bibinfo  {journal} {Phys. Rev. Lett.}\ }\textbf {\bibinfo {volume} {72}},\ \bibinfo {pages} {3439} (\bibinfo {year} {1994}{\natexlab{b}})}\BibitemShut {NoStop}%
\bibitem [{\citenamefont {Altepeter}\ \emph {et~al.}(2005)\citenamefont {Altepeter}, \citenamefont {Jeffrey},\ and\ \citenamefont {Kwiat}}]{ap2006}%
  \BibitemOpen
  \bibfield  {author} {\bibinfo {author} {\bibfnamefont {J.~B.}\ \bibnamefont {Altepeter}}, \bibinfo {author} {\bibfnamefont {E.~R.}\ \bibnamefont {Jeffrey}},\ and\ \bibinfo {author} {\bibfnamefont {P.~G.}\ \bibnamefont {Kwiat}},\ }\href {https://doi.org/10.1016/S1049-250X(05)52003-2} {\bibfield  {journal} {\bibinfo  {journal} {Advances in Atomic, Molecular, and Optical Physics}\ }\textbf {\bibinfo {volume} {52}},\ \bibinfo {pages} {105} (\bibinfo {year} {2005})}\BibitemShut {NoStop}%
\bibitem [{\citenamefont {Horodecki}\ \emph {et~al.}(2009)\citenamefont {Horodecki}, \citenamefont {Horodecki}, \citenamefont {Horodecki},\ and\ \citenamefont {Horodecki}}]{HHH}%
  \BibitemOpen
  \bibfield  {author} {\bibinfo {author} {\bibfnamefont {R.}~\bibnamefont {Horodecki}}, \bibinfo {author} {\bibfnamefont {P.}~\bibnamefont {Horodecki}}, \bibinfo {author} {\bibfnamefont {M.}~\bibnamefont {Horodecki}},\ and\ \bibinfo {author} {\bibfnamefont {K.}~\bibnamefont {Horodecki}},\ }\href {https://doi.org/10.1103/RevModPhys.81.865} {\bibfield  {journal} {\bibinfo  {journal} {Rev. Mod. Phys.}\ }\textbf {\bibinfo {volume} {81}},\ \bibinfo {pages} {865} (\bibinfo {year} {2009})}\BibitemShut {NoStop}%
\bibitem [{\citenamefont {Helstrom}\ and\ \citenamefont {Kennedy}(1974)}]{Helstrom1974}%
  \BibitemOpen
  \bibfield  {author} {\bibinfo {author} {\bibfnamefont {C.}~\bibnamefont {Helstrom}}\ and\ \bibinfo {author} {\bibfnamefont {R.}~\bibnamefont {Kennedy}},\ }\href@noop {} {\bibfield  {journal} {\bibinfo  {journal} {IEEE Trans. Inf. Theory}\ }\textbf {\bibinfo {volume} {20}},\ \bibinfo {pages} {16} (\bibinfo {year} {1974})}\BibitemShut {NoStop}%
\bibitem [{\citenamefont {Marzolino}\ and\ \citenamefont {Prosen}(2014)}]{Marzolino2004}%
  \BibitemOpen
  \bibfield  {author} {\bibinfo {author} {\bibfnamefont {U.}~\bibnamefont {Marzolino}}\ and\ \bibinfo {author} {\bibfnamefont {T.}~\bibnamefont {Prosen}},\ }\href {https://doi.org/10.1103/PhysRevA.90.062130} {\bibfield  {journal} {\bibinfo  {journal} {Phys. Rev. A}\ }\textbf {\bibinfo {volume} {90}},\ \bibinfo {pages} {062130} (\bibinfo {year} {2014})}\BibitemShut {NoStop}%
\bibitem [{\citenamefont {Liu}\ \emph {et~al.}(2016)\citenamefont {Liu}, \citenamefont {Chen}, \citenamefont {Jing},\ and\ \citenamefont {Wang}}]{Liu2016}%
  \BibitemOpen
  \bibfield  {author} {\bibinfo {author} {\bibfnamefont {J.}~\bibnamefont {Liu}}, \bibinfo {author} {\bibfnamefont {J.}~\bibnamefont {Chen}}, \bibinfo {author} {\bibfnamefont {X.-X.}\ \bibnamefont {Jing}},\ and\ \bibinfo {author} {\bibfnamefont {X.}~\bibnamefont {Wang}},\ }\href {https://doi.org/10.1088/1751-8113/49/27/275302} {\bibfield  {journal} {\bibinfo  {journal} {J. Phys. A: Math. Theor.}\ }\textbf {\bibinfo {volume} {49}},\ \bibinfo {pages} {275302} (\bibinfo {year} {2016})}\BibitemShut {NoStop}%
\end{thebibliography}
\end{document}